\DeclareMathOperator{\E}{E}
\def\section{\@startsection{section}{1}
	\z@{1.0\linespacing\@plus\linespacing}{.8\linespacing}{\Large}}
\def\subsection{\@startsection{subsection}{2}
	\z@{.8\linespacing\@plus.7\linespacing}{.7\linespacing}{\large}}
\def\subsubsection{\@startsection{subsubsection}{3}
	\z@{.5\linespacing\@plus.7\linespacing}{-.5em}{\normalfont\bfseries}}
\numberwithin{equation}{section}
\newtheorem{proposition}{Proposition}[section]
\theoremstyle{definition}
\theoremstyle{definition}
\newtheorem{assumption}{Assumption}[section]
\theoremstyle{definition}
\theoremstyle{definition}
\let\expandafter\oldproof\csname\string\proof\endcsname
\let\oldendproof\endproof
\renewenvironment{proof}[1][\proofname]{%
  \oldproof[\normalfont \bfseries  #1]%
}{\oldendproof}
\title[Managing Procurement Auction Failure]{Managing Procurement Auction Failure:\\ Bid Requirements or Reserve Prices?}
\thanks{We thank our editor, Jean-Francois Houde, four anonymous referees, Xiaohong Chen, Li Hao, Tong Li, Michael Peters, Wing Suen, Balazs Szentes, and Janne Tukiainen for their helpful comments and discussions. We acknowledge the financial support from the National Natural Science Foundation of China under grant 72394392 (Ma), the Social Sciences and Humanities Research Council of Canada under grants 435-2021-0189 and 435-2025-0755 (Marmer), the Natural Sciences and Engineering Research Council of Canada under grant RGPIN-2024-04877 (Marmer), HK Grant Council under General Research Funding under grant 17503720 (Xu), and HKU Business School under its visiting scholar program. Shiming Wu provided excellent research assistance. Refine.ink was used to check the paper for consistency and clarity.}
\author{Jun Ma\textsuperscript{*}}
\address{\textsuperscript{*}School of Economics, Renmin University of China, jun.ma@ruc.edu.cn}
\author{Vadim Marmer\textsuperscript{\textdagger}}
\address{\textsuperscript{\textdagger}Vancouver School of Economics, University of British Columbia, vadim.marmer@ubc.ca}
\author{Pai Xu\textsuperscript{\S}}
 \address{\textsuperscript{\S}Business School, University of Hong Kong, paixu@hku.hk}
\begin{document}

\begin{abstract}


This paper examines bid requirements, where the government may cancel a procurement contract unless two or more bids are received. Using a first-price auction model with endogenous entry, we compare the bid requirement and reserve price mechanisms in terms of auction failure and procurement costs. 
We find that, in comparison with bid requirements, reserve prices can reduce procurement costs and substantially lower failure probabilities, especially when entry costs are high or signals are sufficiently informative. Bid requirements are more likely to result in zero entry, while reserve prices can sustain positive entry under broader conditions.
\vspace{1ex}

{\footnotesize \noindent \textsc{Keywords}: First-price auctions; procurement auctions; reserve price; endogenous entry; informative signals; semiparametric estimation}

{\footnotesize \noindent \textsc{JEL Classification}: C12; C13; C14}
\end{abstract}

\date{\today} 
\maketitle



\section{Introduction}\label{sec:intro}

In procurement auctions, the government can cancel a contract if not enough bids are received. Such ``bid requirements'' ensure the reasonableness of the auction price through adequate competition. Many countries have adopted rules and regulations with bid requirements: for instance, the Czech Republic requires at least two bidders \citep{titl2023one}, China mandates a minimum of three qualified bids, and various U.S.\ federal and state regulations impose similar conditions (see Section~\ref{sec:institutional} for a detailed discussion). The rationale behind bid requirements is that a lack of competition may result in high procurement costs. Unfortunately, bid requirements may also lead to auction failure if not enough bidders choose to participate, generating significant costs for the government and society.

Alternatively, procurement costs can be controlled with reserve prices, which are effective even when competition is low. However, in practice, the government may lack a clear benchmark for setting the reserve price.
Moreover, an aggressively set reserve price can itself increase the probability of auction failure if no submitted bids fall below it. Both formats address the same fundamental concern, establishing the reasonableness of the auction price, but through different mechanisms: bid requirements do so by ensuring adequate competition, while reserve prices provide a direct price benchmark.

This paper compares the bid requirement and reserve price formats by examining their probabilities of auction failure and expected procurement costs. Our main finding is that the reserve price format can offer considerable advantages over the bid requirement format, especially in preventing auction failure. Even with relatively aggressive reserve prices, the reserve price format may result in substantially smaller probabilities of auction failure while maintaining lower procurement costs.

Our analysis utilizes the endogenous entry model \citep{ye2007indicative, MSX, GL}. In this model, potential bidders receive imperfect but informative signals about their valuations (private costs of completing the contract). They must incur an entry cost to learn their true valuations and enter. In equilibrium, bidders enter if their signals are below an endogenously determined cutoff. Only entering bidders participate in the bidding stage, where they submit sealed bids without knowing the number of rivals. Under the bid requirement format, the contract is awarded to the lowest bidder provided that at least two bids are submitted; under the reserve price format, an award is made if at least one bid falls below the reserve price. Failing these conditions, the auction is canceled.

The equilibrium entry cutoff is determined by a zero expected profit condition for the marginal entrant, that is, the bidder whose signal equals the entry cutoff. In equilibrium, the marginal entrant is, therefore, indifferent between entering and not entering.

A key disadvantage of the bid requirement format is that entry breaks down at lower entry costs than under reserve prices. This arises because the bid requirement format creates non-monotone incentives: as the entry probability increases from zero, an active bidder's expected revenue initially rises because the auction is less likely to be canceled, but eventually falls as competition intensifies. Under the reserve price format, by contrast, expected revenue is monotonically decreasing in the entry probability and is highest when the entry probability is near zero, since a sole bidder can still win. Under the bid requirement format, the auction will likely be canceled in such low-entry situations, resulting in zero revenue. Consequently, the reserve price format generates higher expected revenues at low entry probabilities and can sustain entry at higher entry costs.

One of our starkest theoretical predictions is that entry stops completely under the bid requirement format when signals are sufficiently informative, guaranteeing auction failure. The intuition can be understood by focusing on the limiting case where signals are perfectly informative, as in \cite{samuelson1985competitive}. In this case, entry unravels because the marginal entrant has a zero probability of winning: their valuation is precisely on the entry threshold, so all other active bidders are guaranteed to have better valuations. The marginal entrant can only win if there are no other active bidders, but then the auction is canceled under the bid requirement format.\footnote{We thank an anonymous referee for providing this illustration via the \cite{samuelson1985competitive} limiting case.} This mechanism continues to disadvantage the marginal entrant when signals are imperfect but sufficiently informative. The reserve price format, by contrast, can support positive entry for all levels of signal informativeness.

Toward the other end of the spectrum, when signal informativeness is low, the marginal entrant may have a valuation and probability of winning similar to those of other active bidders. As a result, the bid requirement format can lead to lower probabilities of auction failure than the reserve price format with an aggressively set reserve price. Thus, the relative ranking of the two formats is determined by the level of signal informativeness, and characterizing this dependence is a distinctive contribution of our analysis.

To evaluate the empirical relevance of these theoretical mechanisms, we revisit the application from \cite{li2009entry}, which examines highway maintenance procurement auctions conducted by the Texas Department of Transportation (TxDoT) between 2001 and 2003. As we discuss in detail in Section~\ref{sec:institutional}, the institutional environment of TxDoT auctions, in particular, the non-enforcement of the engineer's estimate as a reserve price and the near-absence of single-bidder auctions in the data, motivates modeling these auctions under the bid requirement format.

According to our estimates, the expected procurement cost is $2.0$--$3.7$ percentage points (of the engineer's estimate) lower under the reserve price format than under bid requirements. Central to our findings are the striking differences in auction failure probabilities. For auctions with $9$--$12$ potential bidders, the probability of auction failure is estimated at $23\%$--$26\%$ under the reserve price format and $44\%$--$51\%$ under the bid requirement format. These differences arise because estimated entry probabilities are relatively low, ranging from $13\%$ to $19\%$, leading to a high likelihood of only one active bidder and subsequent project cancellations under the bid requirement format.

Our results also suggest an explanation for why the reserve price was not enforced in practice: the engineer's estimate corresponds to only the $29$th percentile of the valuations distribution. Enforcing a binding reserve price at this level would exclude $71\%$ of potential realizations, creating a significant trade-off between price control and participation.

We find that signal informativeness is moderately high, with Spearman's rank correlation of $0.68$ between valuations and signals. However, it is plausible that the level of signal informativeness would be different in other applications. We therefore use the estimated structural model to investigate the relative performance of the bid requirement and reserve price formats under different levels of signal informativeness.

Theoretically, the effect of signal informativeness is ambiguous, as changing its level triggers two competing mechanisms: an ``information'' effect and a ``cutoff'' effect. The information effect reflects the direct impact of signal informativeness: holding the entry probability constant, more informative signals reduce expected procurement costs because entering bidders tend to have lower private costs. In contrast, the cutoff effect operates indirectly through the equilibrium entry probability, which may increase or decrease in response to more precise information. Consequently, the overall impact of signal informativeness on the expected procurement cost remains ambiguous.

We find that under the bid requirement format and for all entry probabilities, the marginal entrant's revenue decreases with signal informativeness. Given the previously discussed non-monotone incentives, this generates discontinuous entry patterns under bid requirements. Specifically, entry ceases completely when signals are sufficiently informative, confirming our theoretical prediction. For example, in the case of $10$ potential bidders, according to our estimates, auctions under the bid requirement format fail with probability one if Spearman's rank correlation between private costs and signals exceeds $0.72$. The reserve price format, on the other hand, can support the entire range of signal informativeness levels with strictly positive entry probabilities. Thus, we conclude that the reserve price format is particularly advantageous not only when entry costs are high but also when signals are sufficiently informative.

For robustness, we also estimate a ``soft'' bid requirement model in which, following \cite{li2009entry}, the government enters as a bidder when only one bid is submitted, rather than canceling the auction outright. This alternative specification produces higher estimates of the entry cost but yields very similar estimates of the other model fundamentals: the informativeness of signals and the distribution of private costs. Our main conclusions regarding the relative strengths of the two formats remain unchanged.

Because signals are unobserved and there are no exogenous entry-cost shifters, the model is not fully identified nonparametrically. To overcome these challenges, we employ a semiparametric approach initially discussed in \cite{GL}, modeling the joint distribution of valuations and signals through a parametric copula function with a single parameter while treating the marginal distribution of valuations nonparametrically. The copula parameter provides a direct measure of signal informativeness through its one-to-one correspondence with Spearman's rank correlation.

Methodologically, we extend this framework by establishing formal identification conditions and developing a practical GMM-type estimation procedure that utilizes the bootstrap to compute the efficient weight matrix. Some of the details of our econometric procedure are of independent interest. For example, in the online Supplement to this paper, we derive the asymptotic distribution of the empirical cumulative distribution function (CDF) of the estimated private costs. Combined with the derivations in \cite{Ma2019}, the results can be used for inference on the CDF and the probability density function (PDF) of latent private costs.

Our paper contributes to the literature on entry in auctions. Earlier papers include \cite{samuelson1985competitive} and \cite{levin1994equilibrium}, who focus on the issues of costly entry and bidders' uncertainty about their values, respectively. In a more recent paper, \cite{li2009entry} discuss the selection between different entry models and study the impact of the number of potential bidders on expected procurement costs. This issue is also examined in \cite{MSX}, who develop formal tests for \cite{samuelson1985competitive}, \cite{levin1994equilibrium}, and the endogenous entry model with affiliated signals. They also discuss the issue of selective entry in the endogenous entry model. The endogenous entry model with affiliated signals is also the subject of \cite{GL}, who study its nonparametric identification, and \cite{chen2024identification}, who study the identification of the model under risk aversion. The literature has also considered several important issues such as indicative bidding \citep{roberts2013should}, estimation of entry costs \citep{xu2013nonparametric}, bidders' risk attitudes \citep{fang2014inference}, entry rights \citep{bhattacharya2014regulating}, and sequential bidding \citep{quint2018theory}. Our paper also contributes to an extensive literature on competition and procurement costs \citep[see, e.g.,][and many others]{hong2002increasing, li2009entry, krasnokutskaya2011bid}.

The issues of low competition and the prevalence of single-bid auctions in procurement have been discussed in the literature \citep{kang2022winning,titl2023one}. However, the topic of auction failure due to insufficient participation is largely overlooked. We contribute to the literature by investigating mitigating mechanisms.



The rest of the paper proceeds as follows. Section~\ref{sec:institutional} provides institutional background and motivates our modeling choices. Section \ref{sec:model} presents the model and analyzes the two formats. In Section \ref{sec:Signals_cost}, we investigate the role of signal informativeness in the comparison of the two formats. Section \ref{sec:ID} discusses the semiparametric identification and estimation of the model under bid requirements. Section \ref{sec:empirical} presents our estimation results. In Section \ref{sec:counterfactuals}, we compare the two formats in terms of the failure probability and expected procurement cost at the estimated and varying levels of signal informativeness. Section \ref{sec:soft_bid_req} gives results on the equilibrium, identification, and estimation of the model with the soft bid requirement and presents estimation results under the assumption of a soft bid requirement rule. The online Supplement contains additional theoretical and empirical results, econometric details, and their proofs.\footnote{The Supplement is available at \href{https://ruc-econ.github.io/auction-supp-v12.pdf}{https://ruc-econ.github.io/auction-supp-v12.pdf}.}


\section{Background}\label{sec:institutional}


Bid requirements (i.e., rules that cancel an auction if too few bids are received) are widely used in public procurement. The 2012 public procurement reform in the Czech Republic introduced the requirement of at least two bidders, making it illegal to award a contract if only one bid was received \citep{titl2023one}. China enacted a law in 2000 stipulating that a procurement process is deemed unsuccessful unless a minimum of three qualified bids are received.

In the U.S., the Federal Acquisition Regulation (Section 14.404) states that the agency may decide to cancel the auction if only one bid is received and the contracting officer cannot determine the reasonableness of the price. State-level laws may impose more stringent rules. For example, North Carolina (General Statutes 143-132) stipulates that for certain projects, ``no contract shall be awarded unless at least three competitive bids have been received''. These rules reflect a broader principle in U.S.\ federal procurement guidelines that at least two bids are needed for adequate price competition; see, e.g., the Federal Transit Administration's best practices \citep[][p.~101]{FTA2016_Best}.

When an auction fails due to insufficient bids, the consequences extend beyond the administrative costs of relisting the project. In the case of highway maintenance procurement, auction failure may lead to delayed maintenance, which can result in reduced traffic safety and increased vehicle operating costs. These social costs can be substantial, particularly for time-sensitive infrastructure projects.

Our empirical application examines highway maintenance procurement auctions conducted by the Texas Department of Transportation (TxDoT) between 2001 and 2003, using the data set from \cite{li2009entry}. For each auction, the data include the number of documentation requests received by TxDoT, providing information on the number of potential bidders.

According to TxDoT rules, the contract is awarded to the bidder with the lowest bid, with notable exceptions. Under Title 43 of the Texas Administrative Code, the TxDoT commission will cancel the auction if ``the lowest bid is higher than the department's estimate and the commission determines that re-advertising the project for bids may result in a significantly lower low bid.'' More generally, TxDoT may reject a contract when it is ``in the best interest of the State'' \citep{TxDOT2014}.

However, as discussed in \cite{li2009entry}, the engineer's estimate is not enforced as a reserve price in practice, with many winning bids exceeding the published estimates. For example, in auctions with $9$--$12$ potential bidders, between $13\%$ and $44\%$ of winning bids are above the engineer's estimate. Using a similar argument, \cite{krasnokutskaya2011identification} justifies treating the reserve price as non-enforced in Michigan highway procurement data, where many bids exceed the engineer's estimate. There may be good reasons for this non-enforcement: the administration requires the estimate to be finalized months before the bid opening date, so it may not reliably reflect current construction or maintenance costs.

The non-enforcement of the reserve price, combined with a striking pattern in the data, motivates modeling TxDoT auctions under the bid requirement format. As reported in \cite{li2009entry}, only a negligible fraction of auctions have just one active bidder (13 auctions, or $2.35\%$ of the entire sample). Yet, given the observed entry probabilities, one would expect a substantially higher fraction: approximately $28\%$--$30\%$ for cases with $9$--$12$ potential bidders. The large number of ``missing'' single-bidder auctions strongly suggests that official procurement guidelines and prevailing industry practices have shaped the expectations of both auctioneers and bidders that auctions with insufficient competition will be canceled. Our main analysis therefore adopts the bid requirement format as the data-generating process for the TxDoT auctions.

While the bid requirement format serves as a natural benchmark, mapping it cleanly to practice presents challenges. \cite{li2009entry} report a small number of auctions with only one active bidder that were nevertheless awarded. Although these auctions are excluded from their TxDoT data set, we also estimate a ``soft'' bid requirement model in which, following the assumption in \cite{li2009entry}, the government enters as a bidder when only one bid is submitted, rather than canceling the auction outright. This alternative specification provides a robustness check. It produces higher estimates of the entry cost but yields very similar estimates of the other model primitives, and our main conclusions remain unchanged.

\section{Model and auction formats}\label{sec:model}

In this section, we describe the auction model with endogenous entry and discuss
how strategies and expected outcomes change under two alternative formats: (i)
``bid requirement,'' where at least two active bidders must participate in bidding for the
contract to be awarded but no reserve price is set, and (ii) ``reserve price,'' where the contract
is awarded if at least one of the submitted bids is below the reserve price.

\subsection{Model}

We follow \citet{MSX} while adapting the model to low-bid
procurement auctions. 
Let $n\geq 2$ denote the number of potential bidders. 
At the entry stage, each potential bidder draws a private cost $V$ and a signal $T$. The draws are from the same joint distribution for all potential bidders and are independent across bidders. At that stage, a bidder does not know their private cost  $V$ of completing the contract but can learn it after paying the entry cost $\kappa$. After paying the entry cost, a potential bidder becomes an active bidder and proceeds to the bidding stage, knowing their $V$. The decision to enter is based on the drawn signal $T$, the entry cost $\kappa$, the joint distribution of $V$ and $T$, and the number of potential bidders. The joint distribution and the number of potential bidders are known to all participants. Potential bidders are risk-neutral, and their project completion costs and signals are private.

Only active bidders who have paid the entry cost $\kappa$ may participate in the bidding. The number of active bidders is unknown to the participants. Sealed bids are submitted, and the contract is awarded to the lowest bidder.

Let $F_{V,T}(v,t)$  denote the joint CDF of private contract completion costs $V$ and private signals $T$. It is convenient to express this CDF in terms of the copula function: $F_{V,T}(v,t)=C(F(v),F_T(t))$, where $C(\cdot,\cdot)$ is the copula function, and $F(\cdot)$  and $F_T(\cdot)$ are the marginal CDFs of $V$ and $T$, respectively. While the private costs $V$ are unobserved, they are identified conditional on entry and can be estimated using bid data. On the other hand, signals $T$ are unobserved and unidentified. It is convenient to use the ranks of signals instead. Define $S\coloneqq F_T(T)$, and write the joint CDF of costs $V$ and $S$ as $F_{V,S}(v,s)\coloneqq C(F(v),s)$. For simplicity, we refer to $S$ as signals in what follows. We make the following assumption.

\begin{assumption}\label{a:copula_1}
\begin{enumerate}[(i)]
    \item[]
    
    \item The copula function $C(\cdot,\cdot)$ is thrice continuously differentiable. \label{a:C_diffble}
    \item $C_{22}(x,y)\coloneqq \partial^2C(x,y)/\partial y^2< 0$ for all $x,y\in (0,1)$.\label{a:goodnews}
    \item The CDF $F(\cdot)$ of the private costs $V$ of completing the contract is absolutely continuous with a continuous PDF denoted by $f(\cdot)$. 
    \item $f(\cdot)$ is bounded away from zero on its compact support $[\underline{v},\overline{v}]$.\label{a:marginal_cdf}
\end{enumerate}
\end{assumption}

By the copula properties, the conditional distribution of the private cost given the signal $S$ can be expressed as $F_{V\mid S}(v\mid s)=C_2(F(v),s)$, where $C_2(x,y)\coloneqq \partial C(x,y)/\partial y$.  Therefore, Assumption \ref{a:copula_1}\ref{a:goodnews} is equivalent to a first-order stochastic dominance relationship for the conditional distribution of private costs given signals: for all $v\in (\underline v,\overline v)$ and  $0\leq s_1<s_2\leq 1$,
\[
F_{V\mid S}(v \mid s_1) > F_{V\mid S}(v \mid s_2).
\]
In \cite{MSX}, this condition was called the ``good news'' assumption:
drawing a smaller signal corresponds to having a stochastically smaller contract completion cost. 

We consider only pure-strategy symmetric equilibria. Under the ``good news'' assumption, the entry strategy is to enter if a sufficiently small signal is drawn. That is, a bidder with a signal $S$ enters when $S\leq p$, where the entry cutoff $ p\in[0,1]$ is determined in equilibrium. This is because a bidder's expected profit from entry, conditional on their signal $S=s$, is a decreasing function of $s$ under the ``good news'' assumption.\footnote{See the discussion following Proposition \ref{p:entry} below.} Since $S$ is uniformly distributed by construction, $p$ is also the probability of entry.

 Below, we discuss the equilibrium entry and bidding strategies, expected winning bids, and failure probabilities under the two auction formats.

\subsection{Bid requirement format}\label{sec:bid_requirement}

Under the bid requirement format, the contract is awarded only if two or more bids are submitted; however, there is no reserve price. An auction fails when there are no active bidders or only one active bidder.

Given an entry cutoff (or probability) $p\in(0,1)$, the distribution of $V$ conditional on entry is given by
\begin{align}\label{eq:F^*}
    F^*(v\mid p)&\coloneqq \Pr[V\leq v\mid S\leq p] ={C(F(v),p)}/{p}.
\end{align}
Let $H(v\mid p,n)$ denote the probability of an active bidder with a private cost $v$ winning the contract when the entry probability of the competitors is $p$:
\begin{align*}
    H(v\mid p,n) &\coloneqq\Lambda^{n-1} (v\mid p)-(1-p)^{n-1},\,\text{where}\\
    \Lambda(v\mid p)&\coloneqq 1-p +p\cdot (1-F^*(v\mid p)) 
    = 1-C(F(v),p).
\end{align*}
The $\Lambda(\cdot\mid p)$ component of $H(\cdot\mid p,n)$ captures the probability that a competitor either does not enter or draws a private cost above $v$. The $(1-p)^{n-1}$ component reflects the bid requirement for at least one other active bidder. 
 
Following standard arguments, the equilibrium bidding strategy is given by
\begin{equation}
    \beta(v\mid p,n)\coloneqq v+\int_v^{\overline{v}}\frac{H(u\mid p,n)}{H(v\mid p,n)}du.\label{eq:bidding} \footnote{See, e.g., \citet[Section 2.3]{krishna2010auction}. In this case, the equilibrium bidding function $\beta(\cdot\mid p,n)$ solves the differential equation $d(\beta(v\mid p,n)\cdot H(v\mid p,n))/dv=v\cdot H'(v\mid p,n)$, where $H'(\cdot \mid p,n)$ denotes the derivative of $H(\cdot \mid p,n)$, with a boundary condition $\beta(\overline{v}\mid p,n)=\overline{v}$.}
\end{equation}
When the entry probability is $p$, a bidder with a signal $S=s$ has ex-ante expected profit from entry
\begin{equation*}
   \Pi(p,n,\kappa,s)\coloneqq  \int_{\underline v}^{\overline{v}}(\beta(v\mid p,n)-v)H(v\mid p,n) dF_{V\mid S}(v\mid s)-\kappa. 
\end{equation*}
The marginal entrant is characterized by the signal $ s = p $. In equilibrium, this entrant is indifferent between entering and not entering the auction, which implies that their ex-ante expected profit is zero. 

Let $p(n,\kappa)$ denote the pure-strategy symmetric equilibrium entry probability in auctions with $n$ potential bidders and entry cost $\kappa$. We have:
\begin{equation}\label{eq:equilibrium condition}
    \Pi\big(p(n,\kappa),n,\kappa,p(n,\kappa)\big)=0.
\end{equation}
The following result provides a more detailed characterization of the equilibrium entry probability.

\begin{proposition}\label{p:entry}
    Suppose that Assumption \ref{a:copula_1} holds.  If $p(n,\kappa)>0$,  then it satisfies:
    \begin{equation}\label{eq:entry}
        \int_{\underline v}^{\overline{v}} C_2(F(v),p(n,\kappa))H(v\mid p(n,\kappa), n)dv=\kappa.
    \end{equation}
\end{proposition}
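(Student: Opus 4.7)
The plan is to start from the equilibrium zero-profit condition \eqref{eq:equilibrium condition} and simplify the expected profit integral using the closed-form equilibrium bid function and integration by parts. Writing the conditional CDF via the copula as $F_{V\mid S}(v\mid s)=C_2(F(v),s)$ (Assumption \ref{a:copula_1}\ref{a:C_diffble}--\ref{a:marginal cdf} guarantees the required smoothness), the associated density with respect to $v$ is $C_{12}(F(v),p)f(v)$. The first step is therefore to rewrite
\[
\Pi(p,n,\kappa,p)+\kappa=\int_{\underline v}^{\overline v}\big(\beta(v\mid p,n)-v\big)H(v\mid p,n)\,C_{12}(F(v),p)f(v)\,dv.
\]

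The next step is to eliminate the equilibrium bid function using the explicit form \eqref{eq:bidding}, which gives
\[
\big(\beta(v\mid p,n)-v\big)H(v\mid p,n)=\int_{v}^{\overline v}H(u\mid p,n)\,du.
\]
Plugging this in and applying integration by parts with $G(v)\coloneqq \int_v^{\overline v}H(u\mid p,n)\,du$ and $K(v)\coloneqq C_2(F(v),p)$ converts the double integral into $\int_{\underline v}^{\overline v} C_2(F(v),p) H(v\mid p,n)\,dv$ plus boundary terms $[G(v)K(v)]_{\underline v}^{\overline v}$. The upper boundary term vanishes since $G(\overline v)=0$. The lower boundary term is $-G(\underline v)\,C_2(F(\underline v),p)=-G(\underline v)\,C_2(0,p)$, and this vanishes because $C(0,y)=0$ for every $y\in[0,1]$ (a defining property of copulas), so differentiating in $y$ yields $C_2(0,p)=0$. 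Combining the pieces and invoking $\Pi(p(n,\kappa),n,\kappa,p(n,\kappa))=0$ yields \eqref{eq:entry}.

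The main obstacle I anticipate is the handling of the boundary term at $v=\underline v$; without the copula identity $C_2(0,p)=0$ one would be left with a spurious term proportional to $\int_{\underline v}^{\overline v}H(u\mid p,n)\,du$. Everything else is bookkeeping: checking that Fubini/integration by parts applies under Assumption \ref{a:copula_1} (the integrand is continuous in $v$ and $H(\cdot\mid p,n)$ is bounded), and that the derivative $\partial F_{V\mid S}/\partial v = C_{12}(F(\cdot),p)f(\cdot)$ is a legitimate density. A brief remark should also note that since the equilibrium entry cutoff is assumed strictly positive, the divisor $p$ in $F^*(\cdot\mid p)$ is well defined, and that the ``good news'' condition (Assumption \ref{a:copula_1}\ref{a:goodnews}) underlies the monotonicity of $\Pi(p,n,\kappa,\cdot)$ that justifies the cutoff entry strategy used implicitly when writing \eqref{eq:equilibrium condition}.
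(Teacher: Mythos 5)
Your proposal is correct and follows essentially the same route as the paper: express the marginal entrant's expected revenue as $\int(\beta(v\mid p,n)-v)H(v\mid p,n)\,dC_2(F(v),p)$, substitute the closed form of the bidding function from \eqref{eq:bidding}, integrate by parts, and invoke the zero-profit condition \eqref{eq:equilibrium condition}. You are in fact slightly more explicit than the paper about why the lower boundary term vanishes (via the copula groundedness identity $C_2(0,p)=0$), which is a correct and worthwhile detail.
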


The proof of the proposition establishes that, given the entry probability $p$, the expected profit from entry for a bidder with signal $S=s$ is given by 
\begin{equation}\label{eq:Pi_bid_requirement}
\Pi(p,n,\kappa,s)=\int_{\underline v}^{\overline v}C_2(F(v),s)H(v\mid p,n)dv-\kappa,
\end{equation}
which is a decreasing function of $s$ because of the ``good news'' assumption $C_{22}(\cdot,\cdot)< 0$. This confirms that the equilibrium entry strategy is to enter when $S\leq p(n,\kappa)$. 

In a model with no bid requirements, \citet{MSX} show that the marginal entrant's expected profit is non-increasing in the entry probability $p$, which results in a unique symmetric entry equilibrium. The bid requirement for at least two active bidders changes the situation in two ways. First, $p(n,\kappa)=0$ (i.e., no entry) is always an equilibrium. Second, it is evident from \eqref{eq:Pi_bid_requirement} that the expected profit from entry can now be non-monotone in the entry probability,\footnote{This is due to the $(1-p)^{n-1}$ term in the probability of winning function $H(\cdot\mid p, n)$.} which may create multiple non-trivial entry equilibria.
Intuitively, the non-monotonicity can be understood as follows: to win the contract, a bidder needs at least one other active bidder to avoid the contract's cancellation. As a result, the expected profit increases with the entry probability $p$ when $p$ is small. However, when the entry probability is sufficiently large, a bidder faces more competition at the bidding stage and is more likely to lose.

\begin{figure}
    \centering
    \begin{subfigure}{0.45\textwidth}
        \includegraphics[width=\linewidth]{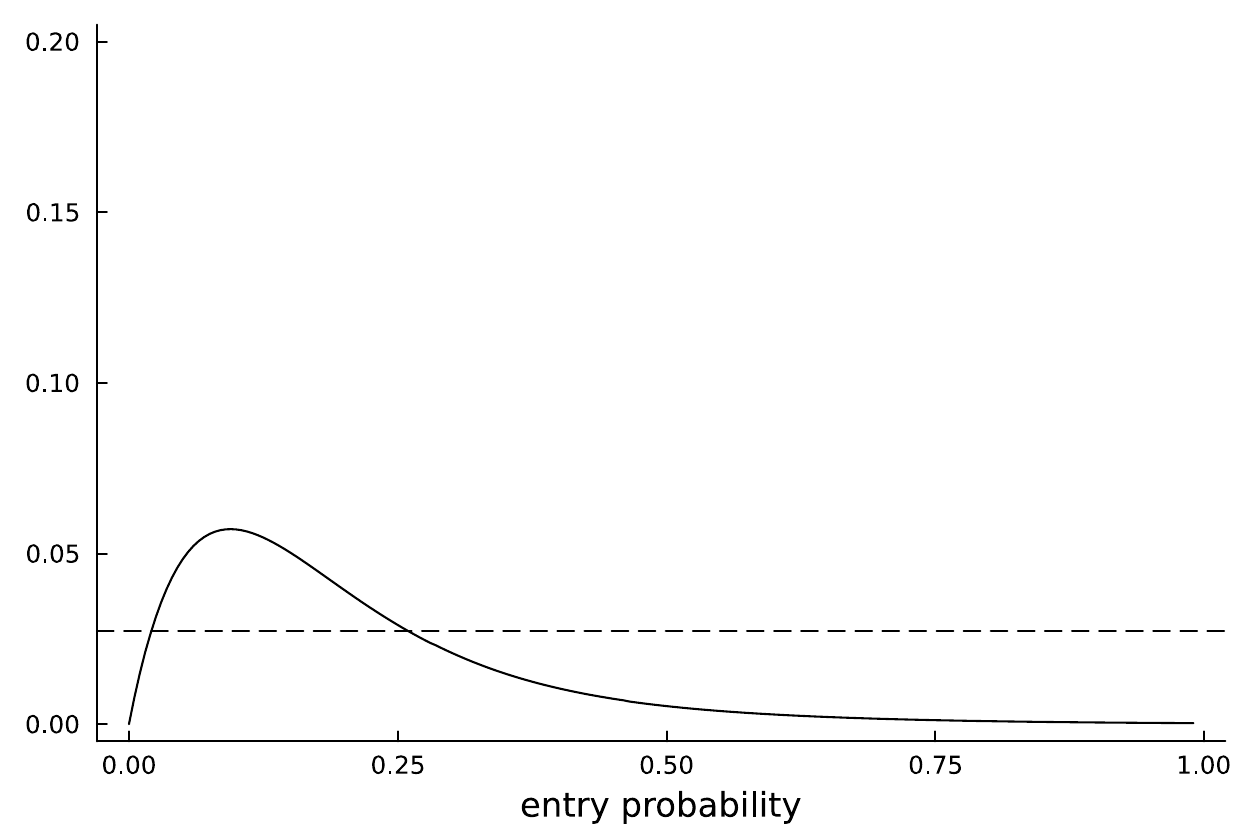}
        \caption{Bid requirement format}\label{fig:marginal_revenue_BR_14}
    \end{subfigure}
    \hspace{1ex}
    \begin{subfigure}{0.45\textwidth}
        \includegraphics[width=\linewidth]{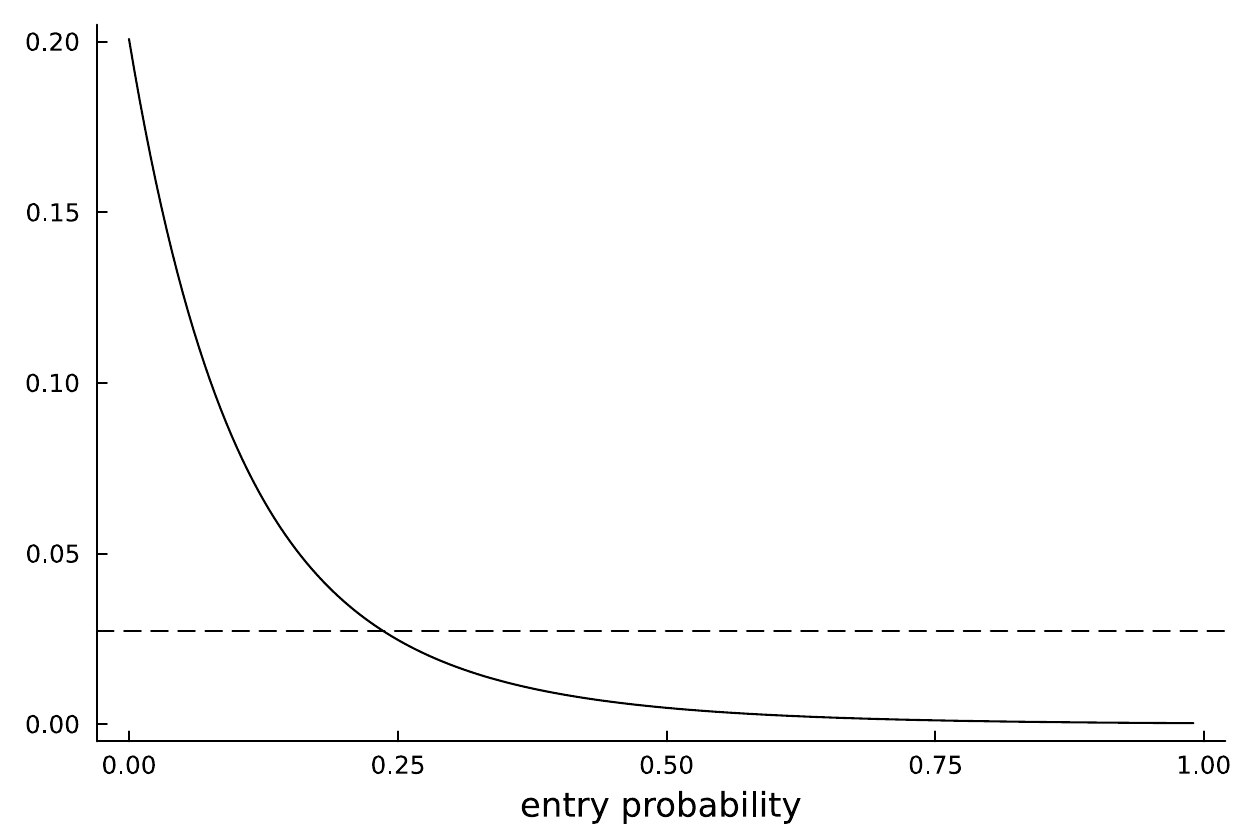}
        \caption{Reserve price format}\label{fig:marginal_revenue_RP_14}
    \end{subfigure}
    \caption{The marginal entrant's estimated expected revenue (solid line) and entry cost (dashed line), as fractions of the engineer's estimate, under the two formats for different entry probabilities, estimated using the TxDoT data for auctions with 14 potential bidders}\label{fig:multiple_equilibria}
\end{figure}

Consider the marginal entrant (i.e., an active bidder with $s=p$) and note that the type of the marginal entrant varies with $p$. By Proposition \ref{p:entry}, the derivative of the expected profit function $p\mapsto\Pi(p,n,\kappa,p)$ is given by
\begin{align*}
  \int_{\underline v}^{\overline{v}}C_{22}(F(v),p)H(v\mid p,n)dv -(n-1)\int_{\underline v}^{\overline{v}}C_2^2(F(v),p)\Lambda^{n-2}(v\mid p) dv \\ +(n-1)(1-p)^{n-2}\int_{\underline v}^{\overline{v}}C_2(F(v),p)dv. 
\end{align*}
The expression in the first line is negative by the ``good news'' assumption $C_{22}(\cdot,\cdot)< 0$. However, the positive term in the second line may dominate the expression in the first line, especially for smaller values of $p$. In such cases, the marginal entrant's expected profit from entry increases in $p$. Note that the term in the second line is present only because of the bid requirement condition.

Figure \ref{fig:marginal_revenue_BR_14} illustrates the non-monotonicity of the resulting function, showing the estimated entry cost and the expected counterfactual revenue of the marginal entrant for different entry probabilities in TxDoT procurement auctions with 14 potential bidders; see Section \ref{sec:empirical} for details. The non-monotonicity of the marginal entrant's expected profit creates multiple equilibria for entry. Besides the trivial zero-entry equilibrium, there are two other equilibria with entry probabilities $p=0.02$ and $0.26$. However, the equilibrium with $p=0.02$ is unstable, as small negative shocks to the entry probability push it away from $0.02$ toward zero; similarly, small positive shocks push the entry probability upward. Since the trivial zero-entry equilibrium generates no data and the unstable equilibrium can be ruled out, we assume that the stable equilibrium with entry probability $0.26$ generates all data in this example.

Because the marginal entrant's type is $s = p$, the type varies along the curve in Figure~\ref{fig:marginal_revenue_BR_14}. Figure~\ref{fig:eqm_R2} illustrates how the marginal entrant's type is pinned down in equilibrium. For $n = 14$, it plots the expected revenue from entry as a function of the entry probability $p$ for three signal realizations $s$. The curve corresponding to $s = 0.26$ intersects the entry cost $\kappa$ exactly at $p = 0.26$, confirming that this signal realization is the equilibrium type of the marginal entrant.

The existence of a stable non-trivial equilibrium is not guaranteed in general. As Figure \ref{fig:marginal_revenue_BR_14} illustrates, when the entry cost is sufficiently high, the expected revenue of the marginal entrant falls below the entry cost for all entry probabilities $p$, and only the trivial zero-entry equilibrium exists. Conversely, when the entry cost is sufficiently low, only the trivial and unstable equilibria exist. To see this, note that the expected revenue from entry is non-negative, and the derivative of $p\mapsto\Pi(p,n,\kappa,p)$ at $p=0$ and the limit of $\Pi(p,n,\kappa,p)$ as $p\to1$ are strictly positive. However, as Figure \ref{fig:marginal_revenue_BR_14} again illustrates, when the non-trivial stable equilibrium exists, its uniqueness is guaranteed by the quasi-concavity of $p\mapsto\Pi(p,n,\kappa,p)$. Under mild regularity conditions on the copula density function, we show below that for sufficiently large $n$, the function $p\mapsto\Pi(p,n,\kappa,p)$ is strictly decreasing except in a small neighborhood of zero.

\begin{figure}
  \centering
  \includegraphics[width=0.5\textwidth]{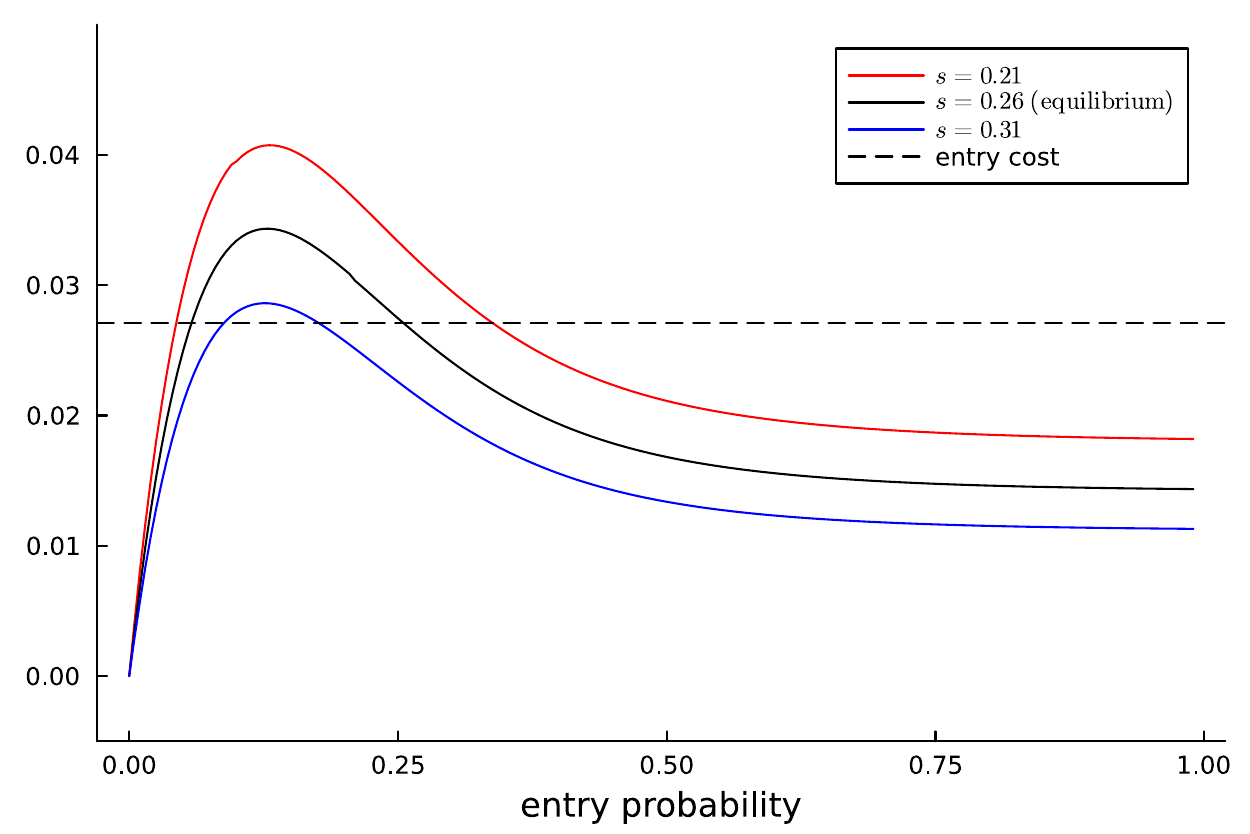}
  \caption{Expected revenue curves for three different signal realizations $s=0.21$ (red), $0.26$ (black), and $0.31$ (blue). The middle curve corresponding to $s = 0.26$ intersects the entry cost $\kappa$ at $p = 0.26$, establishing the equilibrium type of the marginal entrant.}
  \label{fig:eqm_R2}
\end{figure}

\begin{proposition}\label{p:quasi-concavity}
Suppose that Assumption \ref{a:copula_1} holds and the copula density function is bounded away from zero on $[0,1]^2$. Then, for any fixed $\varepsilon\in\left(0,1\right)$ and all sufficiently large $n$, the function $p\mapsto \Pi(p,n,\kappa,p)$ is strictly decreasing on $\left[\varepsilon,1\right]$.
\end{proposition}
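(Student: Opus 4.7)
The plan is to analyze the total derivative $g'(p) \coloneqq \frac{d}{dp}\Pi(p,n,\kappa,p)$, which is written out as a sum of three terms in the paragraph preceding the proposition. The first two terms (the integral involving $C_{22}$ and $-(n-1)\int C_2^2\Lambda^{n-2}\,dv$) are nonpositive, while the third, $(n-1)(1-p)^{n-2}\int C_2(F(v),p)\,dv$, is nonnegative. The strategy is to show that, uniformly over $p\in[\varepsilon,1]$, the positive term decays exponentially in $n$ while the magnitude of the first (negative) term decays only polynomially, so that the negative part dominates once $n$ is large.

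For the positive term, $p\geq\varepsilon$ yields $(n-1)(1-p)^{n-2}\leq(n-1)(1-\varepsilon)^{n-2}$, while $\int C_2(F(v),p)\,dv$ is uniformly bounded in $p$ (using $C_2\leq 1$ together with Assumption~\ref{a:copula_1}\ref{a:marginal cdf}). For the first negative term, Assumption~\ref{a:copula_1}\ref{a:C_diffble} makes $C_{22}$ continuous on the compact set $[0,1]^2$, and Assumption~\ref{a:copula_1}\ref{a:goodnews} makes it strictly negative there, so $C_{22}\leq -m_1$ for some $m_1>0$; since $H\geq 0$,
\[
\Bigl|\int_{\underline v}^{\bar v} C_{22}(F(v),p)H(v\mid p,n)\,dv\Bigr| \;\geq\; m_1\int_{\underline v}^{\bar v} H(v\mid p,n)\,dv.
\]
The universal copula inequality $C(x,p)\leq x$, combined with the substitution $x=F(v)$ and Assumption~\ref{a:copula_1}\ref{a:marginal cdf}, gives
\[
\int_{\underline v}^{\bar v}\Lambda^{n-1}(v\mid p)\,dv \;=\; \int_0^1 \frac{(1-C(x,p))^{n-1}}{f(F^{-1}(x))}\,dx \;\geq\; \frac{1}{f_{\max}}\int_0^1(1-x)^{n-1}\,dx \;=\; \frac{1}{n f_{\max}},
\]
uniformly in $p$. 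Subtracting the correction $(\bar v-\underline v)(1-p)^{n-1}\leq(\bar v-\underline v)(1-\varepsilon)^{n-1}$ still leaves $\int H\,dv\geq 1/(2nf_{\max})$ for all sufficiently large $n$.

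Putting the bounds together, the first term contributes at least $m_1/(2nf_{\max})$ in magnitude to the negative side, while the positive term is $O((n-1)(1-\varepsilon)^{n-2})$. Since exponential decay beats polynomial decay, for all large enough $n$ the negative part dominates uniformly on $[\varepsilon,1]$, giving $g'(p)<0$; the second negative term only reinforces this inequality. The main obstacle is the uniform $1/n$ lower bound on $\int H(v\mid p,n)\,dv$: as $n$ grows, $\Lambda^{n-1}(v\mid p)$ concentrates near $v=\underline v$, and the $(1-p)^{n-1}$ subtraction threatens to wipe out that mass. The copula bound $C(x,p)\leq x$ circumvents this by producing a $p$-free lower bound of order $1/n$; the auxiliary hypothesis that the copula density is bounded away from zero can be used to sharpen the contribution of the second negative term, but my argument above shows it is not essential for the strict inequality $g'(p)<0$.
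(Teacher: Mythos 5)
Your overall strategy (exponentially small positive term versus polynomially small negative term, uniformly on $[\varepsilon,1]$) is the same as the paper's, and several of your intermediate bounds are fine: the bound $(n-1)(1-p)^{n-2}\int C_2\,dv \le (n-1)(1-\varepsilon)^{n-2}(\bar v-\underline v)$ and the lower bound $\int_{\underline v}^{\bar v}\Lambda^{n-1}(v\mid p)\,dv \ge 1/(nf_{\max})$ via $C(x,p)\le x$ are both correct. But the step you lean on for the dominant negative contribution has a genuine gap. You claim that continuity and compactness give $C_{22}(x,y)\le -m_1<0$ on all of $[0,1]^2$. No copula can satisfy this: the boundary identities $C(0,y)=0$ and $C(1,y)=y$ force $C_{22}(0,y)=C_{22}(1,y)=0$ for every $y$, so $\sup_{[0,1]^2}C_{22}=0$ and no such $m_1$ exists (Assumption~\ref{a:copula_1}\ref{a:goodnews} can only be meant to hold away from $x\in\{0,1\}$). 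This is not a removable technicality, because the mass of $H(v\mid p,n)=\Lambda^{n-1}(v\mid p)-(1-p)^{n-1}$ concentrates, as $n$ grows, exactly where $F(v)$ is near $0$ --- precisely where $C_{22}(F(v),p)$ vanishes (linearly in $F(v)$). A Laplace-type expansion then shows $\bigl|\int C_{22}(F(v),p)H(v\mid p,n)\,dv\bigr|=O(1/n^2)$, not the $\Omega(1/n)$ you assert, and its leading coefficient involves $\partial_x C_{22}(0,p)$, which is not guaranteed to be nonzero. So your designated dominant negative term cannot be shown to dominate.

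The paper's proof is built to avoid exactly this trap: it decomposes $R_1(p,n)=(1-p)^{n-2}A(p,n)-B(p,n)$, uses $C_{22}\le 0$ only to discard terms of the helpful sign, and extracts the guaranteed strictly negative leading term from $-(n-1)\int C_2^2(F(v),p)\Lambda^{n-2}(v\mid p)\,dv$, whose asymptotic coefficient is $2C_{21}^2(0,p)/\bigl(f(\underline v)C_1^3(0,p)\bigr)$. Keeping this coefficient bounded away from zero uniformly in $p$ is precisely what the hypothesis that the copula density is bounded away from zero delivers --- the hypothesis you conclude is ``not essential.'' It is in fact the essential ingredient: without it, $C_{21}(0,p)$ could vanish for some $p$, the leading negative term could degenerate, and the argument would not close. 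To repair your proof you would need to replace the $C_{22}$-based lower bound with an expansion of the $C_2^2$ term near $v=\underline v$ (where $C_2(F(v),p)\approx F(v)\,C_{21}(0,p)$), which is exactly the computation relegated to the paper's Supplement.
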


Proposition \ref{p:quasi-concavity} implies that for any arbitrarily small $\varepsilon\in\left(0,1\right)$ and all sufficiently large $n$, there exists at most one stable equilibrium entry probability in $\left[\varepsilon,1\right]$. In practice, this property may hold for all $n\ge 2$. Given estimates of $F(\cdot)$ and the copula, one can verify quasi-concavity by estimating the derivative of $p\mapsto\Pi(p,n,\kappa,p)$ and checking that it changes sign at most once. For our TxDoT-based estimates, quasi-concavity holds for all $n\ge 2$.

In what follows, we assume that the function $p\mapsto\Pi(p,n,\kappa,p)$ is quasi-concave. Furthermore, let $p(n,\kappa)$ denote the non-trivial stable equilibrium when it exists. When the stable non-trivial equilibrium does not exist, we set $p(n,\kappa)=0$.

Given the equilibrium entry and bidding strategies, we can now describe the expected cost of procurement. Since the procurement cost is undetermined when an auction fails (i.e., fewer than two active bidders), we condition on receiving at least two bids. Let $N^*$ denote the number of active bidders.
\begin{proposition}\label{p:cost_conditional}
Suppose that Assumption \ref{a:copula_1} holds. Conditional on the number of active bidders $N^*\geq 2$ and entry probability $p$, the expected winning bid is given by
\begin{equation}
\label{eq:cost_conditional}
\begin{split}
    K(p, n\mid N^*\geq 2) \coloneqq & \frac{1}{\Pr[N^*\geq 2\mid p,n]}\Bigg(n \int_{\underline{v}}^{\overline{v}} \Lambda^{n-1}(v\mid p)\left(1-\frac{n-1}{n}\Lambda(v\mid p)\right)dv \\
    & +\underline v - \overline{v} \cdot \Pr[N^*<2 \mid p,n]\Bigg),
\end{split}
\end{equation}
where 
\begin{equation}
\label{eq:N^* geq 2}
    \Pr[N^*\geq 2\mid p,n]\coloneqq 1-(1-p)^n-np(1-p)^{n-1}
\end{equation}
is the probability of having at least two active bidders given the entry probability $p$.
\end{proposition}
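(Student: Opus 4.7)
The plan is to compute the unconditional expectation $\E[\beta(V_{(1)}\mid p,n)\mathbf{1}\{N^*\geq 2\}]$ and then divide by $\Pr[N^*\geq 2\mid p,n]$. The formula \eqref{eq:N^* geq 2} for $\Pr[N^*\geq 2\mid p,n]$ is immediate because $N^*\sim\mathrm{Binomial}(n,p)$. By symmetry across the $n$ potential bidders,
\[
\E[\beta(V_{(1)}\mid p,n)\mathbf{1}\{N^*\geq 2\}] = n\cdot\E[\beta(V_i\mid p,n)\,H(V_i\mid p,n)\,\mathbf{1}\{S_i\leq p\}],
\]
since conditional on bidder $i$ entering with cost $V_i=v$, the probability that $i$ has the lowest bid and at least one other bidder enters is exactly $H(v\mid p,n)$. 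Writing the joint density of $V_i$ on $\{S_i\leq p\}$ as $C_1(F(v),p)f(v)$ and using $\Lambda(v\mid p) = 1-C(F(v),p)$ to rewrite this as $-\Lambda'(v\mid p)$, the target reduces to $-n\int_{\underline v}^{\overline v}\beta(v\mid p,n)H(v\mid p,n)\Lambda'(v\mid p)\,dv$.

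The next step is to substitute the bidding formula \eqref{eq:bidding} to obtain $\beta(v\mid p,n)H(v\mid p,n) = vH(v\mid p,n) + \int_v^{\overline v}H(u\mid p,n)\,du$, which splits the expression into two pieces $I_1$ and $I_2$. For $I_2$, swapping the order of integration and using $\Lambda(\underline v\mid p)=1$ yields $I_2 = \int_{\underline v}^{\overline v}H(u\mid p,n)(1-\Lambda(u\mid p))\,du$. Expanding $H = \Lambda^{n-1}-(1-p)^{n-1}$ rewrites $I_2$ as a linear combination of $\int\Lambda^{n-1}\,dv$, $\int\Lambda^n\,dv$, $\int\Lambda\,dv$, and a constant in $(1-p)^{n-1}(\overline v-\underline v)$.

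For $I_1 = -\int v[\Lambda^{n-1}-(1-p)^{n-1}]\Lambda'\,dv$, I would apply integration by parts to each piece, using the antiderivative $\Lambda^{n-1}\Lambda' = \tfrac{1}{n}(d/dv)\Lambda^n$ for the first and treating $v\Lambda'$ directly for the second, with boundary evaluations $\Lambda(\underline v\mid p)=1$ and $\Lambda(\overline v\mid p)=1-p$. Adding $I_1+I_2$, the $\int\Lambda\,dv$ contributions cancel, and the remaining integrals combine, after multiplication by $n$, into $n\int_{\underline v}^{\overline v}\Lambda^{n-1}(v\mid p)\bigl(1-\tfrac{n-1}{n}\Lambda(v\mid p)\bigr)dv$. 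The leftover boundary constants simplify via the algebraic identity $(1-p)^{n-1}(1+(n-1)p) = (1-p)^n + np(1-p)^{n-1} = \Pr[N^*<2\mid p,n]$, giving the advertised term $\underline v - \overline v\Pr[N^*<2\mid p,n]$.

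The main obstacle is the bookkeeping of boundary constants from the two integrations by parts and the order-of-integration swap, together with recognizing the combination $(1-p)^{n-1}(1+(n-1)p)$ as the failure probability $\Pr[N^*<2\mid p,n]$; once that identity is spotted and the $\int\Lambda\,dv$ terms are seen to cancel, everything else is a mechanical rearrangement.
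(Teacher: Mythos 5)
Your proposal is correct, and I verified the algebra: the integral terms combine to $n\int\Lambda^{n-1}(1-\tfrac{n-1}{n}\Lambda)\,dv$, the $(1-p)^{n-1}\int\Lambda\,dv$ contributions cancel, and the boundary constants reduce to $\underline v-\overline v\,(1-p)^{n-1}(1+(n-1)p)=\underline v-\overline v\Pr[N^*<2\mid p,n]$ exactly as you claim. Where you differ from the paper is the first step. The paper conditions on the realized number of active bidders $N^*=j$, writes the winning bid as an integral against the CDF $1-(1-F^*(v\mid p))^{j}$ of the minimum cost, and then sums the binomial series over $j\geq 2$ using $j\binom{n}{j}=n\binom{n-1}{j-1}$ and the binomial theorem to arrive at $-\tfrac{n}{\Pr[N^*\geq 2\mid p,n]}\int\beta(v\mid p,n)H(v\mid p,n)\,d\Lambda(v\mid p)$. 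You reach the same intermediate expression by a symmetry/indicator decomposition across the $n$ potential bidders, observing that $H(v\mid p,n)$ is precisely the probability that bidder $i$ with cost $v$ wins \emph{and} at least one rival is active, and that the sub-density of $V_i$ on $\{S_i\leq p\}$ is $C_1(F(v),p)f(v)=-\Lambda'(v\mid p)$; this is a cleaner probabilistic shortcut that avoids the binomial bookkeeping (and is valid since ties have probability zero under Assumption \ref{a:copula_1}). From that point on, the two arguments are essentially the same mechanical computation: the paper integrates $\int\beta H\,d\Lambda$ by parts and invokes the first-order condition $d(\beta H)/dv=vH'$, while you substitute $\beta H=vH+\int_v^{\overline v}H\,du$ directly and integrate each piece by parts — equivalent routes to the same rearrangement.
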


Two equilibrium objects are of particular interest in our empirical analysis. First, the probability of auction failure (i.e., the event that the auction receives at most one bid) is given by $\Pr[N^*<2\mid p(n,\kappa),n]$. Second, the expected equilibrium winning bid conditional on there being at least two bidders, $K(p(n,\kappa), n\mid N^*\geq 2)$, corresponds to the price in \cite{li2009entry}. 

Some remarks regarding these equilibrium objects are in order. Auction failures are often not directly observed in data that contain only award files, as is the case in our empirical application. If records of call-for-bids are publicly available, merging them with the award files yields procurement data that contain records of failed auctions.\footnote{Government agencies are often legally required to advertise the projects and call-for-bids.} Even when the data contain such records, the reasons for failure may not be reported. For example, a review of some of the TxDoT archived letting outcomes in 2024 shows that all unawarded projects are listed without any indication of why they were not awarded. Multiple factors may cause a project to go unawarded: bids that are too high, non-compliant, or non-responsive; changes in priorities, funding problems, etc. Thus, in the absence of detailed institutional information, even an unawarded outcome recorded in the data may reflect multiple mechanisms beyond low entry or inadequate competition, which complicates the analysis. In such cases, using only data on successful auctions, the expression for $\Pr[N^* < 2\mid p,n]$ from Proposition \ref{p:cost_conditional} can be used to estimate the probability of auction failure and conduct counterfactual analysis (see Section \ref{subsec:identification} for a detailed discussion).
 

If one is concerned with the unconditional expected procurement cost, it would require making an assumption about the cost of failure for the auctioneer. For example, suppose that when an auction fails and the project cannot be delayed, the auctioneer hires a contractor at the maximum cost $\overline v$. The corresponding unconditional expected procurement cost is then given by
\begin{equation*}
    K(p, n)=n\int_{\underline{v}}^{\overline{v}} \Lambda^{n-1}(v\mid p)\left(1-\frac{n-1}{n}\Lambda(v\mid p)\right)dv +\underline v.
 \end{equation*}
The expression is convenient for counterfactual analysis because it does not involve additional parameters. Nevertheless, it requires the strong assumption that the auctioneer's cost of failure is $\overline v$, which may be difficult to justify in practice. In particular, invoking the outside option typically requires a strong justification that increased competition is infeasible even after additional marketing efforts, in addition to demonstrating urgency in completing the project. Therefore, in the subsequent analysis, we focus on the expected winning bid conditional on receiving at least two bids.


 
\subsection{Reserve price format\label{sec:reserve_price}}
Under this format, the contract is awarded if at least one bid is below the reserve price $r$. An auction fails when no bidder enters or when all active bidders draw values above the reserve price.


Since the contract is awarded even with a single bid below the reserve price, an active bidder with a private cost $v\leq r$ wins the auction with probability $\Lambda^{n-1}(v\mid p)=(1-C(F(v),p))^{n-1}$. In this format, the probability of winning decreases with $p$ since there is no bid requirement. For $v\leq r$, the bidding function is given by
\begin{align*}
    \beta(v\mid p,n,r)\coloneqq v+\int_{v}^{r}\left(\frac{\Lambda(u\mid p)}{\Lambda(v\mid p)}\right)^{n-1}du.
\end{align*} 
Using the same arguments as in the proof of Proposition \ref{p:entry}, when the entry cutoff is $p$, the expected profit from entry for a bidder with $S=s\leq p$ is given by 
\begin{align}\label{eq:Pi_reserve_price}
   \Pi(p,n,\kappa,r,s)\coloneqq \int_{\underline v}^{r}C_2 (F(v),s)\Lambda^{n-1}(v\mid p)dv-\kappa,
\end{align}
and the pure-strategy symmetric equilibrium probability of entry $p(n,\kappa,r)$ solves
\begin{align*}
    \Pi(p(n,\kappa,r),n,\kappa,r,p(n,\kappa,r))&=0.
\end{align*}
In this case, the entry equilibrium is unique, as the expected revenue of the marginal entrant (a bidder with $S=p$) is decreasing in $p$, as in \citet{MSX}; see Figure \ref{fig:marginal_revenue_RP_14}. 

Comparing Figures \ref{fig:marginal_revenue_BR_14} and \ref{fig:marginal_revenue_RP_14} provides the central insight into the differences in entry between the two formats. Under the reserve price format, the marginal entrant's expected revenue is monotonically decreasing in the entry probability and takes relatively large values for small entry probabilities. This is expected, since for small entry probabilities the marginal entrant would face fewer competitors or may even be the sole active bidder. Under the bid requirement format, by contrast, the contract is likely to be canceled for small entry probabilities, which substantially reduces the marginal entrant's expected revenue. As a result, the reserve price format can support a broader range of entry costs with strictly positive entry probabilities.

Note that in this low-price auction setting, the bidder's expected revenue from entry is an increasing function of the reserve price $r$. Since $\partial \Lambda(v\mid p)/\partial p<0$ and $C_{22}(F(v),p)< 0$, it follows that a lower reserve price corresponds to a smaller equilibrium entry probability. 

With a binding reserve price $\underline v < r<\overline{v}$, a bidder is active if their signal is below the entry cutoff $p$ and their value is below the reserve price $r$. This occurs with probability $0<C(F(r),p)<p$.  Given the entry cutoff $p$, the probability of auction failure (i.e., not receiving any bids) is
\begin{equation*}
    \Pr[N^*=0\mid p,n,r]\coloneqq (1-C(F(r),p))^{n},
\end{equation*}
where $N^*$ again denotes the number of active bidders. Besides the distribution of values and signals and the number of potential bidders, the probability of failure also depends on the reserve price. Since a lower reserve price implies a smaller entry probability in equilibrium, a lower reserve price also increases the auction failure probability.

 By the same arguments as in the proof of Proposition \ref{p:cost_conditional}, we can show that the expected winning bid conditional on $N^*\geq 1$ and entry probability $p$ is given by
\begin{align}
    K(p,n, r\mid N^*\geq 1) &\coloneqq
  \frac{1}{\Pr[N^*\geq 1 \mid p,n,r]}\Bigg(n \int_{\underline{v}}^{r} \Lambda^{n-1}(v\mid p)\left(1-\frac{n-1}{n}\Lambda(v\mid p)\right)dv \notag\\
    &\qquad +\underline v - r\cdot  \Pr[N^*=0 \mid p,n,r]\Bigg).\label{eq:win_Bid_reserve}
\end{align}
The equilibrium probability of auction failure and expected winning bid are given by $\Pr[N^*=0\mid p(n,\kappa,r),n,r]$ and $K(p(n,\kappa,r),n,r\mid N^*\geq 1)$, respectively. As in the bid requirement format, a larger probability of entry corresponds to a lower probability of auction failure.

The analytical comparison of the two formats does not predict which has a higher probability of success. Fixing the entry cutoff $p$ for both formats, the difference in the probabilities of auction failure between the bid requirement and reserve price formats is given by
\begin{equation*}
    \Big((1-p)^n-(1-C(F(r),p))^n\Big)+np(1-p)^{n-1}.
\end{equation*}
The first term (the difference between the probabilities of zero active bidders) is negative. However, the second term (the probability of only one active bidder under the bid requirement format) is positive, and the comparison is ambiguous. Moreover, the equilibrium entry probabilities differ between the formats. For example, in the empirical section below, we find that the probability of entry can be smaller or larger under the reserve price format in the TxDoT setting, depending on the number of potential bidders. Nevertheless, even when the probability of entry is larger under the reserve price format, the probability of bidding is lower than under the bid requirement format. The analytical comparison of the expected winning bids is similarly ambiguous.

\section{Role of signal informativeness} \label{sec:Signals_cost}

The level of signal informativeness plausibly varies in practice. In this section, we provide analytical results on how signal informativeness interacts with auction formats to affect procurement outcomes. To study this interaction, we model the joint distribution of private costs and signal ranks using a parametric family of copula functions indexed by a scalar parameter that captures the level of signal informativeness.\footnote{The parametric copula assumption will continue to play an important role for the identification and estimation of the model in the sections below because signals are unobserved.} Specifically, we assume that the joint distribution of private costs and signal ranks is given by
\begin{equation}
    C(F(v),p)=C(F(v),p;\theta_0),
\label{eq:parametric_copula}
\end{equation}
where the function $C(\cdot,\cdot;\theta)$ is known up to the value of a scalar parameter $\theta\in\Theta\subset\mathbb R$ and $\Theta$ denotes the set of parameters permitted by the chosen copula function. Note that the marginal distribution of private costs $V$ remains nonparametric. This semiparametric approach is convenient, as the single parameter $\theta$ captures the dependence between the private costs $V$ and the uniformly distributed signal ranks $S$. Therefore, $\theta$  can be viewed as a measure of signal informativeness.
We make the following assumption.
\begin{assumption}\label{a:copula}
    \begin{enumerate}[(i)]
    \item[] 
    \item 
    $C(x,y;\cdot)$  is continuously differentiable with $\partial C(x,y;\theta)/\partial \theta \geq 0$. \label{a:positive_ordering}
    \item Assumptions \ref{a:copula_1}\ref{a:C_diffble}--\ref{a:goodnews}  are satisfied by $C(x,y;\theta)$ for all $\theta\in\Theta$.
    \end{enumerate}
\end{assumption}

Under Assumption \ref{a:copula}\ref{a:positive_ordering}
, the family of copulas $\{C(x,y;\theta):\theta \in \Theta\}$ is positively ordered: for all $x,y\in[0,1]$ and all $\theta_1\leq \theta_2$, $C(x,y;\theta_1)\leq C(x,y;\theta_2)$. Many copula families satisfy the positive ordering assumption, including the Gaussian copula and important Archimedean copulas such as Ali-Mikhail-Haq, Clayton, Frank, Gumbel, and Joe. For such copulas, a higher value of $\theta$ corresponds to a stronger association between private costs and signals, as measured by statistics such as Kendall's $\tau$ or Spearman's $\rho$ \citep[][Chapter 5]{nelsen2007introduction}. In our auction context, the positive ordering property thus ensures that higher values of $\theta$ correspond to more informative signals.

The positive ordering property also has implications for the distribution of private costs conditional on entry, as defined in equation \eqref{eq:F^*}. For a given entry probability $p$, more informative signals make this conditional distribution less stochastically dominant, so that entering bidders tend to have lower costs.

As discussed in Section \ref{sec:model}, increased entry reduces the probability of auction failure under both formats. Therefore, to study the effect of signal informativeness $\theta$ on auction failure, we can focus on its effect on the equilibrium probability of entry.\footnote{Note that under the reserve price format, the probability of auction failure depends on the entry probability as well as the direct probability that an entrant's cost falls below the reserve price.} To this end, consider how $\theta$ affects the marginal entrant's expected revenue from entry (see Figure \ref{fig:multiple_equilibria}). If, for instance, the marginal entrant's expected revenue from entry decreases in $\theta$ at all relevant entry probabilities $p$, then the stable equilibrium entry probability under the bid requirement format would also decrease in $\theta$.

Define $\Lambda(v\mid p;\theta)\coloneqq 1-C(F(v),p;\theta)$ and $H(v\mid p,n;\theta)\coloneqq \Lambda^{n-1}(v\mid p;\theta)-(1-p)^{n-1}$.
The equilibrium entry probabilities under the bid requirement format can now be written as $p(n,\kappa;\theta_0)$. Other functions are similarly augmented with the copula parameter to reflect their dependence on signal informativeness.
By Proposition \ref{p:entry}, under the bid requirement format, the derivative of the marginal entrant's expected revenue from entry with respect to $\theta$ is 
\begin{align*}
\lefteqn{\int_{\underline v}^{\overline v}\frac{\partial C_2(F(v),p;\theta)}{\partial \theta}H(v\mid p,n;\theta) dv} \\
&\quad -(n-1)\int_{\underline v}^{\overline v}C_2(F(v),p;\theta)\Lambda^{n-2}(v\mid p;\theta)\frac{\partial C(F(v),p;\theta)}{\partial \theta}dv.
\end{align*}
By the positive ordering condition of Assumption \ref{a:copula}\ref{a:positive_ordering}, the expression in the second line is negative. However, the first term can be positive or negative. To see why, note that $C_2(F(v),p;\theta)$ is the conditional CDF of private costs given $S=p$. The positive ordering condition implies that,
as $\theta$ increases, this conditional CDF concentrates around the
value of $v$ satisfying $F(v)=p$, with the conditional CDF curves
for different $\theta$'s crossing at that point. In other words, more informative signals do not imply a stochastic dominance relation on the conditional CDFs of $V$ given $S=p$. Therefore, the theory does not predict whether more informative signals lead to a lower or higher probability of entry under the bid requirement format. The situation is similar under the reserve price format, with no definite prediction for the effect of $\theta$ on equilibrium entry.

Nevertheless, we show below that under the bid requirement format, entry stops completely when signal informativeness $\theta$ is sufficiently high. By contrast, the reserve price format can support positive entry probabilities even when signals are perfectly informative. Under the bid requirement format, let $\Pi(p,n,\kappa,s;\theta)$ denote the expected profit from entry of a potential bidder with signal $s$ in auctions with $n$ potential bidders and entry cost $\kappa$, when the entry probability is $p$ and signal informativeness is $\theta$. Similarly, under the reserve price format, we use $\Pi(p,n,\kappa,r,s;\theta)$ to denote the same object in auctions with a reserve price $r$. Recall that the marginal entrant is characterized by $s=p$. Finally, suppose that as $\theta$ increases, private costs and signals become perfectly positively dependent; that is, $\lim_{\theta\uparrow \infty}C(x,y;\theta)=\min\{x,y\}$, where the limiting function is the comonotonicity copula.
\begin{proposition}\label{prop:limit_theta_infty}
    Suppose that Assumptions \ref{a:copula_1}\ref{a:marginal_cdf} and \ref{a:copula} hold. Furthermore, suppose that  $\lim_{\theta\uparrow \infty}C(x,y;\theta)=\min\{x,y\}$. The expected profit of the marginal entrant satisfies:
    \begin{enumerate}[(a)]
        \item Under the bid requirement format, \[\lim_{\theta \uparrow \infty} \Pi(p,n,\kappa,p;\theta) = -\kappa.\] \label{prop:limit_theta_bid_req}
        \item Under the reserve price format, \[\lim_{\theta \uparrow \infty} \Pi(p,n,\kappa,r,p;\theta) = \mathbbm{1}(F(r)\geq p)(r-F^{-1}(p))(1-p)^{n-1}-\kappa,\] \label{prop:limit_theta_res_price} where $\mathbbm{1}(\cdot)$ denotes the indicator function.
    \end{enumerate}
\end{proposition}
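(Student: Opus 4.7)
The plan is to apply the dominated convergence theorem directly to the integral representations of $\Pi$ in \eqref{eq:Pi_bid_requirement} and \eqref{eq:Pi_reserve_price}, since the assumed limit $C(x,y;\theta)\to\min\{x,y\}$ pins down the pointwise limits of both $\Lambda(v\mid p;\theta)$ and $C_2(F(v),p;\theta)$. Because $C_2$ is a conditional CDF and $\Lambda\in[0,1]$, the integrands are uniformly bounded by $1$ on the compact support $[\underline v,\overline v]$ (provided by Assumption \ref{a:copula_1}\ref{a:marginal cdf}), so the dominated convergence hypotheses are automatic once the pointwise limits are identified.

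First, I would record the pointwise limits. From $C(x,y;\theta)\to\min\{x,y\}$ it is immediate that $\Lambda(v\mid p;\theta)\to 1-\min\{F(v),p\}$, which equals $1-p$ on $\{F(v)\geq p\}$ and $1-F(v)$ on $\{F(v)<p\}$. For the derivative, I would use the interpretation $C_2(F(v),p;\theta)=F_{V\mid S}(v\mid p;\theta)$ recalled just after Assumption \ref{a:copula_1}: in the comonotonic limit $V=F^{-1}(S)$ almost surely, so the conditional distribution of $V$ given $S=p$ degenerates to a point mass at $F^{-1}(p)$, giving
\begin{equation*}
\lim_{\theta\uparrow\infty}C_2(F(v),p;\theta)=\mathbbm{1}(p\leq F(v))
\end{equation*}
at every $v\neq F^{-1}(p)$. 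The exceptional point has Lebesgue measure zero, which is all dominated convergence requires.

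With these limits in hand, I would finish each part by splitting the domain at $F^{-1}(p)$. For (a), the integrand $C_2(F(v),p;\theta)\,[\Lambda^{n-1}(v\mid p;\theta)-(1-p)^{n-1}]$ vanishes in the limit on both pieces: on $\{F(v)>p\}$ the bracketed factor collapses to zero (since $\Lambda\to 1-p$), while on $\{F(v)<p\}$ the factor $C_2$ collapses to zero. Hence the integral tends to $0$ and the limit equals $-\kappa$. For (b), if $F(r)<p$ the whole interval $[\underline v,r]$ lies in $\{F(v)<p\}$ and the integral again vanishes; if $F(r)\geq p$, only the subinterval $(F^{-1}(p),r]$ survives, where the limiting integrand equals $(1-p)^{n-1}$, producing $(r-F^{-1}(p))(1-p)^{n-1}$. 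Packaging the two cases yields the indicator expression in the proposition.

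The main obstacle will be justifying the pointwise convergence of $C_2(F(v),p;\theta)$, since uniform convergence of copulas (automatic from the $1$-Lipschitz property) does not in general transfer to their partial derivatives. I would handle this by treating $v\mapsto C_2(F(v),p;\theta)$ as a family of CDFs on $[\underline v,\overline v]$: the associated probability measures converge weakly to the point mass at $F^{-1}(p)$ because the joint distributions of $(V,S)$ under $C(\cdot,\cdot;\theta)$ converge weakly to the comonotonic distribution while the marginal of $V$ is held fixed. Pointwise convergence at all continuity points of the limiting degenerate CDF then follows by a standard weak-convergence argument, which is enough for the dominated convergence step above.
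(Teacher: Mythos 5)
Your proof follows essentially the same route as the paper's: identify the pointwise limits $C_2(F(v),p;\theta)\to\mathbbm{1}(F(v)\geq p)$ and $\Lambda(v\mid p;\theta)\to 1-\min\{F(v),p\}$, pass to the limit inside the integrals in \eqref{eq:Pi_bid_requirement} and \eqref{eq:Pi_reserve_price} by dominated convergence, and split the domain at $F^{-1}(p)$. The only substantive difference is that you explicitly flag and justify the convergence of the partial derivative $C_2$ (which the paper simply asserts), so your argument is, if anything, slightly more complete on that technical point.
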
 

Part \ref{prop:limit_theta_bid_req} of the proposition shows that when signals are sufficiently informative, the marginal entrant's expected revenue under the bid requirement format is zero for all values of $p$. Consequently, the marginal entrant's profit is negative and entry stops. Part \ref{prop:limit_theta_res_price}, by contrast, shows that the marginal entrant's expected revenue can be positive under the reserve price format, provided that the entry probability $p$ is sufficiently small relative to the reserve price, as captured by the $r-F^{-1}(p)$ term.

The results of Proposition \ref{prop:limit_theta_infty} have an intuitive interpretation. When signals are highly informative, the marginal entrant is likely to have the worst private cost among all entering bidders and therefore can only win when there are no other active bidders. Under the bid requirement format, however, the auction would be canceled in such a scenario. Since the marginal entrant cannot win the contract, entry unravels. By contrast, under the reserve price format, the marginal entrant can win the contract with no other active bidders, so positive entry can be sustained even when signals are perfectly informative.

While the bid requirement format leads to certain auction failure when signals are sufficiently informative, it may nonetheless have an advantage over the reserve price format when signal informativeness is low. In such cases, marginal entrants may have competitive private costs and can win the contract even when other active bidders are present. For example, entry is completely random when signals and private costs are independent, as in the \citet{levin1994equilibrium} model. By comparison, entry can be more restricted under the reserve price format if the reserve price is set too low, leading to a higher probability of auction failure.

Turning to the cost of procurement, let $K(p, n\mid N^*\geq 2;\theta)$ denote the expected winning bid under the bid requirement format as defined in Proposition \ref{p:cost_conditional}, with explicit dependence on the signal informativeness parameter $\theta$. Similarly, $p(n,\kappa;\theta)$ denotes the equilibrium entry probability in the stable non-trivial equilibrium, provided it exists.
In equilibrium, changing $\theta$ has two effects on the expected winning bid, which we refer to as the ``information'' and ``cutoff'' effects:
\begin{align*}
\lefteqn{\frac{\partial K(p(n,\kappa;\theta), n\mid N^*\geq 2;\theta)}{\partial \theta}  =}\\
 &\quad \underbrace{\frac{\partial K(p, n\mid N^*\geq 2;\theta)}{\partial \theta}\Bigg|_{p=p(n,\kappa;\theta)}}_{\text{information effect}} 
+\underbrace{\frac{\partial K(p(n,\kappa;\theta), n\mid N^*\geq 2;\theta)}{\partial p } \frac{\partial p(n,\kappa;\theta)}{\partial \theta}}_{\text{cutoff effect}}.
\end{align*}
The information effect is the direct impact of more informative signals. It operates through the function $\Lambda(v\mid p;\theta)$ and the distribution of private costs conditional on entry. By the positive ordering in Assumption \ref{a:copula}\ref{a:positive_ordering}, the latter is less stochastically dominant under more informative signals. In other words, entering bidders tend to have smaller private costs, which reduces the expected winning bid:
\begin{align*}
   \lefteqn{\frac{\partial K(p, n\mid N^*\geq 2; \theta)}{\partial \theta} =}\\  &\quad-\frac{n(n-1)}{\Pr[N^*\geq 2\mid p,n]}\int_{\underline v}^{\overline{v}} \frac{\partial C(F(v),p;\theta)}{\partial \theta}\Lambda^{n-2}(v\mid p;\theta)C(F(v),p;\theta)dv\leq 0.
\end{align*}

On the other hand, the cutoff effect is ambiguous because signal informativeness can have a positive or negative effect on equilibrium entry. Comparative statics cannot predict the direction of this impact.

The comparative statics for the expected winning bid under the reserve price format are similar to those for the bid requirement format. The information effect reduces the expected cost of procurement, but the cutoff effect is ambiguous because the equilibrium entry probability can increase or decrease with signal informativeness.

\section{Semiparametric identification and estimation}\label{sec:ID}
\subsection{Identification}\label{subsec:identification}
We estimate the model using the data from \citet{li2009entry}, assuming the data were generated under the bid requirement format. Therefore, we focus on identifying the model's primitives under this format. 

Since signals are unobserved and there are no exogenous entry-cost shifters, the joint distribution of signals and private costs cannot be identified nonparametrically. Therefore, to restore full identification of the model primitives, we continue to use the semiparametric specification in equation \eqref{eq:parametric_copula}, treating the marginal distribution of private costs nonparametrically while modeling the joint distribution of private costs and signals using a parametric copula function.\footnote{The semiparametric approach used in this paper was first proposed in \cite{GL}, see the discussion on page 332 therein, including footnote 18. Note that in our application, there are no exogenous cost shifters that could be used for a full non-parametric identification, as proposed in \cite{GL}.}  Note that from the signal component of the model, only the probability of entry is required to identify the private costs, the entry cost, and the parameter $\theta$. The knowledge or specification of the marginal distribution of the underlying signals is not required.

Suppose that the econometrician observes data from $L$ independent auctions. We use $N_l$ and $N^*_l$ to denote the observed (random) numbers of potential and active bidders, respectively, in auctions $l=1,\ldots,L$. For each auction $l$, we observe equilibrium bids $B_{1l},\ldots,B_{N_l^* l}$. As in \citet{MSX}, the main source of identification of the distribution of private costs and the copula parameter is the variation in the number of potential bidders. Therefore, we assume that the number of potential bidders is exogenous. Furthermore, we assume that auctions with the same number of potential bidders have the same entry cost.\footnote{Unlike \citet{MSX}, we do not assume that all auctions have the same entry cost regardless of the number of potential bidders.} We now introduce assumptions on the data-generating process of the equilibrium bids and entry decisions.
\begin{assumption}
  \label{a:ID1}
\begin{enumerate}[(i)]
\item[]
\item An auction,  including the bids of the active bidders and the number of potential bidders, is observed if and only if  the number of active bidders is no less than 2.\footnote{The assumption is equivalent to assuming that the bid requirement condition holds independently of  other potential reasons for canceling auctions.}\label{a:observed_auctions}
\item Auctions with the same number of potential bidders $N_l=n$ have the same entry cost $\kappa_n$ for all $n\in\mathcal N$, where $\mathcal N$ denotes the support of the distribution $N_l$. \label{a:same_entry_cost}
\item The signal ranks $S_{il}$ of potential bidders $i=1,\ldots,N_l$ in auction $l$ are independently distributed across $i$ and $l$  and are independent of the number of potential bidders $N_l$.\label{a:independent}
\item The function $p\mapsto \Pi(p,n,\kappa_n,p;\theta_0)$ is quasi-concave for all $n\in\mathcal N$. \label{a:unique_equilibrium}
\item The observed number of active bidders $N_{l}^{*}$  is drawn from the conditional distribution of $\bar{N}_{l}\coloneqq\sum_{i=1}^{N_{l}}\mathbbm{1}\left(S_{il}\leq p(N_l,\kappa_{N_l};\theta_0)\right)$  given $\bar{N}_{l}\geq2$, where $p(n,\kappa_n;\theta_0)$ is the equilibrium entry probability as defined in Proposition \ref{p:entry} with $C(\cdot,\cdot)=C(\cdot,\cdot;\theta_0)$.\label{a:N_star dgp}
\item The private costs $V_{il}$ of active bidders $i=1,\ldots,N^*_l$ in auction $l$ are independent draws from the conditional distribution $F^*(\cdot\mid p_{N_l})=C(F(\cdot),p_{N_l};\theta_0)/p_{N_l}$. \label{a:F_star}
\item The CDF $F(\cdot)$ satisfies Assumption \ref{a:copula_1}\ref{a:marginal_cdf}.
    \end{enumerate}
\end{assumption}

Note that since auctions with the same number of potential bidders $n$ are assumed to have the same entry cost $\kappa_n$ in Assumption \ref{a:ID1}\ref{a:same_entry_cost}, we can now write the equilibrium entry probability as
\[p_n \coloneqq p(n,\kappa_n;\theta_0).\]
Moreover, since $\kappa_n$ is a deterministic function of $n$ and the distribution of private costs and signals is independent of the number of potential bidders by Assumption \ref{a:ID1}\ref{a:independent}, we effectively assume that the private costs and signals are independent of the entry cost.

Consider auctions with $N_l=n$ potential bidders. Since we observe the numbers of potential and active bidders, $\E[N^*_{l}/N_{l}\mid  N_l=n]$ is directly identified from the data. Recall from Assumption \ref{a:ID1}\ref{a:N_star dgp} that $\bar{N}_l$ is the latent count of potential bidders whose signals satisfy the entry condition, and that $N^*_l$ has the conditional distribution of
$\bar{N}_l$ given the bid-requirement event $\bar{N}_l\geq 2$. 
We have:
\begin{equation}\label{eq:p_N_ID}
  \E\left[\frac{N^*_{l}}{N_{l}}\mid N_l=n\right] = \Pr[S_{1l}\leq p_n\mid \bar{N}_{l}\geq 2, N_l=n] = \frac{p_n\left(1-\left(1-p_n\right)^{n-1}\right)}{1-\left(1-p_n\right)^{n}-np_n\left(1-p_n\right)^{n-1}},\end{equation}
  where the first equality is by symmetry across potential bidders, and the second equality is derived in the Appendix. 
When the number of potential bidders is two, auctions are observed only if both bidders enter, so any entry probability $0<p_2<1$ is observationally equivalent to entry with probability one and $p_2$ is unidentified. However, for all $n\geq 3$, $p_n$ is identified using equation \eqref{eq:p_N_ID}, as we show in Proposition S1 in the Supplement.

If call-for-bids records, reasons for failure, and the award files with bids are all observed, then $\Pr[N^* < 2\mid p_n,n]$ is an observed feature of the data. In this case, $p_n$ is also identifiable from equation \eqref{eq:N^* geq 2} and therefore $p_n$ is over-identified. This equation can be incorporated into the estimation procedure to improve efficiency and to test the model. Moreover, $p_2$ is identifiable from $\Pr[N^* < 2\mid p_2,2] = 1-p_2^2$.


The CDF of private costs conditional on entry $F^*(\cdot\mid p_n)$ is nonparametrically identified from the data using a modification of the inverse-bidding-function approach of \cite{GPV}.\footnote{In this semiparametric model, the CDF of private costs conditional on entry depends on $\theta$ through the parametric copula function. However, we omit $\theta_0$ from the notation $F^*(v\mid p_n)$ to emphasize that it is nonparametrically identified. } In the context of auctions with entry, this approach was used by \citet{MSX} and \cite{xu2013nonparametric}, but in our case it also requires an adjustment for the bid requirement condition $N_l^*\geq 2$.
Let $\xi(\cdot\mid p, n)$ denote the inverse bidding strategy in auctions with entry probability $p$ and $N_l=n$ potential bidders:\footnote{We suppress $\theta_0$ from the notation for the inverse bidding function as it is identified nonparametrically. Following the same logic, we suppress $\theta_0$ from the notation for the CDF of bids, etc.} \[\xi(\cdot\mid p,n)\coloneqq \beta^{-1}(\cdot\mid p,n).\]
We use $G(\cdot\mid n)$ and $g(\cdot \mid n)$ to denote the CDF and PDF of the submitted bids, respectively, in auctions with $N_l=n$ potential bidders. Both functions can be estimated consistently from bid data. The CDF of private costs conditional on entry satisfies $F^*(v\mid p_n) 
= G(\xi^{-1}(v \mid p_n, n)\mid n)$ and, therefore, is identified if the inverse bidding function $\xi(\cdot\mid p_n,n)$ is identified. Consider an auction in which each of the $n-1$ competitors enters with probability $p$ and bids according to $G(\cdot \mid n)$. Let
\[W\left(b \mid p,n \right)\coloneqq \left\{ 1-p\cdot G\left(b\mid n\right)\right\} ^{n-1}-\left(1-p\right)^{n-1}
\]
denote the probability of winning with bid $b$, and let $W'\left(b \mid p,n \right)\coloneqq \partial W\left(b \mid p,n \right)/\partial b$ denote its derivative with respect to $b$. Since $\beta (v\mid p_n,n)$ is the best response, it must satisfy the first-order condition
\[W(\beta (v\mid p_n,n) \mid p_n, n)+(\beta (v\mid p_n,n)-v)W'(\beta (v\mid p_n,n) \mid p_n,n)=0.\]
The following result shows that the inverse bidding function is nonparametrically identified from bid data.

\begin{proposition}\label{p:inverse_bidding_function}
  Suppose that Assumption \ref{a:ID1} holds.  The inverse bidding function satisfies 
     \begin{align}\label{eq:inverse_bidding_function}
\xi\left(b\mid p_n, n\right) &= b+\frac{W\left(b\mid p_n,n\right)}{W'\left(b\mid p_n,n\right)}  =  b-\frac{\eta_{n}\left(p_n,G\left(b\mid n\right)\right)}{\left(n-1\right)g\left(b\mid n\right)},\text{ where }\\
\eta_{n}\left(p,y\right)&\coloneqq \frac{1}{p}\left({1-p\cdot y-\frac{\left(1-p\right)^{n-1}}{\left(1-p\cdot y\right)^{n-2}}}\right).
    \end{align}
\end{proposition}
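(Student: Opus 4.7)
The plan is to split the statement into the two equalities and handle each in turn. The first equality is a direct rearrangement of the first-order condition stated immediately above the proposition; the second reduces to an algebraic simplification after evaluating $W'$ in closed form.

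For the first equality, I would begin from the FOC
\[
W(\beta(v\mid p_n,n)\mid p_n,n) + (\beta(v\mid p_n,n) - v)\,W'(\beta(v\mid p_n,n)\mid p_n,n) = 0.
\]
Inversion requires that $\beta(\cdot\mid p_n,n)$ be strictly increasing on $[\underline v,\overline v]$, which follows from the explicit integral representation in equation \eqref{eq:bidding} together with the strict monotonicity of $H(\cdot\mid p_n,n)$ in $v$, itself a consequence of Assumption \ref{a:copula_1}. Solving the FOC for $v$ and writing $b=\beta(v\mid p_n,n)$ yields $\xi(b\mid p_n,n)=b+W(b\mid p_n,n)/W'(b\mid p_n,n)$, provided $W'(b\mid p_n,n)\neq 0$. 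The latter holds on the interior of the bid support because $g(b\mid n)>0$ there, $p_n\in(0,1]$, and $(1-p_n G(b\mid n))^{n-2}>0$.

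For the second equality, I would differentiate $W(b\mid p_n,n)=(1-p_n G(b\mid n))^{n-1}-(1-p_n)^{n-1}$ with respect to $b$ to obtain $W'(b\mid p_n,n) = -(n-1)\,p_n\, g(b\mid n)\,(1-p_n G(b\mid n))^{n-2}$. Substituting and factoring $(1-p_n G(b\mid n))^{n-2}$ from the numerator gives
\[
\frac{W(b\mid p_n,n)}{W'(b\mid p_n,n)} = -\frac{1}{(n-1)\,p_n\, g(b\mid n)}\left[(1-p_n G(b\mid n)) - \frac{(1-p_n)^{n-1}}{(1-p_n G(b\mid n))^{n-2}}\right],
\]
which by the definition of $\eta_n$ equals $-\eta_n(p_n,G(b\mid n))/((n-1)\,g(b\mid n))$. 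Combining with the first equality delivers the stated expression.

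There is no deep obstacle in this argument. The most delicate step is confirming that $\beta(\cdot\mid p_n,n)$ is strictly increasing on $[\underline v,\overline v]$ so that $\xi$ is well-defined as a global inverse; this is read off directly from \eqref{eq:bidding} using the ``good news'' property. Everything else is algebraic bookkeeping. Once the two equalities are in hand, nonparametric identification of $\xi(\cdot\mid p_n,n)$ follows because $G(\cdot\mid n)$ and $g(\cdot\mid n)$ are identified from observed bids and $p_n$ is identified via equation \eqref{eq:p_N_ID}.
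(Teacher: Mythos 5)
Your proposal is correct and follows essentially the same route the paper intends: rearranging the first-order condition stated immediately before the proposition to get $\xi(b\mid p_n,n)=b+W(b\mid p_n,n)/W'(b\mid p_n,n)$, then differentiating $W$ explicitly and factoring out $(1-p_nG(b\mid n))^{n-2}$ to recover $\eta_n$. The added checks on strict monotonicity of $\beta$ and nonvanishing of $W'$ are consistent with the paper's assumptions and do not change the argument.
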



The copula parameter is identified through the restriction the copula imposes on the CDF of private costs conditional on entry. This restriction arises from the independence of the number of potential bidders from the private costs and signals. Let $\mathcal N$ denote the support of the distribution of the number of potential bidders $N_{l}$. We have
\begin{align}\label{eq:restrictions_for_theta_and_F}
    F^*(v\mid p_n) = {C(F(v),p_n;\theta_0)}/{p_n}, \text{ for all }v\in[\underline{v},\overline v]\text{ and }n\in\mathcal{N}.
\end{align}
Note that exogenous variation in the number of potential bidders $N_l$ alone is insufficient for identification. Since we allow the entry cost $\kappa_n$ to vary with $n$, the entry probability $p_n$ can be the same for different values of $n$, in which case equation \eqref{eq:restrictions_for_theta_and_F} does not identify the copula parameter or the distribution of private costs. A necessary condition for identification is that the equilibrium entry probabilities differ for at least two different numbers of potential bidders. This restricts all the primitives: the copula function, the CDF of private costs, the numbers of potential bidders, and the entry costs $\kappa_n$. Since equilibrium entry probabilities are identified, we can verify this condition empirically. A result in Section S1.2 of the Supplement shows that for any $\varepsilon \in (0,1)$, if the numbers of potential bidders are sufficiently large and the entry costs are constant, any pair of equilibrium entry probabilities in $[\varepsilon,1]$ must differ.

Let ${Q}(\cdot,v;\theta)$ denote the inverse function of 
 ${C}(\cdot,v;\theta)/v$. The restriction in \eqref{eq:restrictions_for_theta_and_F} can be rewritten as
\begin{equation} \label{eq:F=Q}
    F(v)={Q}(F^*(v\mid p_n),p_n;\theta_0), \text{ for all }v\in[\underline{v},\overline v]\text{ and }n\in\mathcal{N}.
\end{equation}
Since $F(\cdot)$ does not depend on $n$, the expression in \eqref{eq:F=Q} implies that for all $v\in[\underline{v},\bar v]$ and $n_1,n_2\in\mathcal{N}$,
\begin{equation}\label{eq:theta_identification_equation}
    {Q}(F^*(v\mid p_{n_1}),p_{n_1};\theta_0)={Q}(F^*(v\mid p_{n_2}),p_{n_2};\theta_0).  
\end{equation}
Proposition S2 in the Supplement provides sufficient conditions for the global and local identification of the copula parameter $\theta_0$ using the nonlinear restrictions in \eqref{eq:theta_identification_equation}. The conditions require that the equilibrium entry probabilities differ for at least two different numbers of potential bidders and that the copula function satisfies a mild regularity condition. 

With the copula parameter $\theta_0$ identified, we can recover the CDF $F(\cdot)$ of private costs from \eqref{eq:F=Q} and further recover $\Lambda(v\mid p_n;\theta_0)=1-C(F(v),p_n;\theta_0)$. The entry cost $\kappa_n$ can now be identified for all $n\in\mathcal N$ using the result of Proposition \ref{p:entry}:
\begin{equation}\label{eq:entry_cost_ID}
    \kappa_n=\int_{\underline v}^{\overline{v}} C_2(F(v),p_n;\theta_0)\Big(\Lambda^{n-1}(v\mid p_n;\theta_0)-(1-p_n)^{n-1}\Big)dv. 
\end{equation}
Using the corresponding expressions in Sections \ref{sec:bid_requirement} and \ref{sec:reserve_price}, we can similarly compute the expected winning bids and probabilities of auction failure under both formats; these quantities are used for counterfactual experiments.

\subsection{Estimation}\label{sec:estimation}

We take a semiparametric approach to estimate the model's primitives. Nonparametrically identified objects are estimated nonparametrically, while the objects identified through the parametric copula assumption are estimated using GMM based on the model's restriction in \eqref{eq:restrictions_for_theta_and_F}.

Let $\widehat G (\cdot \mid n)$ denote the empirical CDF of submitted bids in auctions with $n$ potential bidders. Following \citet{Ma2021}, we use a boundary-adaptive
local linear kernel estimator $\widehat{g}\left(\cdot \mid n\right)$ 
of the PDF of submitted bids $g\left(\cdot\mid n\right)$ in auctions with $n$ potential bidders.\footnote{
The boundary-adaptive estimator coincides with the usual kernel density estimator away from the boundaries while correcting the bias of the latter near the boundaries. Thus, it avoids trimming near-boundary observations as in \cite{GPV}. See also \citet{Hickman:2014kq}, \citet{Ma2019}, and \citet{Zincenko2024} for the discussions of the issue and various solutions in the auctions context. }
Let $\widehat p_n$ denote the estimated entry probability in auctions with $n$ potential bidders computed using the empirical analog of \eqref{eq:p_N_ID}. Using the plug-in approach and \eqref{eq:inverse_bidding_function}, we can construct an estimator of the inverse bidding function $\xi(\cdot\mid p_n,n)$:
\begin{equation*}
\widehat{\xi}\left(b\mid n\right)\coloneqq b-\frac{\eta_{n}\left(\widehat{p}_{n},\widehat{G}\left(b\mid n\right)\right)}{\left(n-1\right)\widehat{g}\left(b\mid n\right)}.
\end{equation*}
We use this estimator to compute the estimated (pseudo) private costs $\widehat{V}_{il}\coloneqq\widehat{\xi}(B_{il}\mid N_l)$ and then use the empirical CDF of the pseudo costs to estimate the CDF of private costs conditional on entry in auctions with $n$ potential bidders:\footnote{Recently, \citet{Zincenko2024} proposed a similar empirical CDF estimator for the private value CDF and studies its asymptotic linearization. However, \citet{Zincenko2024}'s estimated inverse bidding function uses the kernel-smoothed bid CDF estimator, and smoothing introduces an additional bias.}
\begin{equation} \label{eq:F_star_hat_definition}
\widehat{F}^{*}\left(v\mid p_n\right)\coloneqq\frac{\sum_{l:N_{l}=n}\sum_{i=1}^{N_{l}^{*}}\mathbbm{1}\left(\widehat{V}_{il}\leq v\right)}{\sum_{l:N_{l}=n}N_{l}^{*}}.
\end{equation}

After computing estimates of the CDF of private costs conditional on entry, we use the copula restriction in \eqref{eq:restrictions_for_theta_and_F}
 to estimate the copula parameter $\theta_0$  and the marginal CDF of private costs $F(\cdot)$. Although the restriction holds at a continuum of points $v\in[\underline v,\overline v]$, we use a finite grid in practice. The boundaries of the support of the distribution of private costs can be estimated using the estimated inverse bidding function together with the maximum and minimum observed bids. Consider a grid $v_1<\ldots<v_J$ within the estimated boundaries of the support.
 We estimate $\theta_0,F(v_1),\ldots,F(v_J)$ by solving the following optimization problem:
 \begin{align*}
     (\widehat\theta,\widehat F(v_1),\ldots,\widehat F(v_J))\coloneqq \underset{\theta,y_{1},...,y_{J}}{\arg\min}\sum_{n\in\mathcal N}\sum_{j=1}^J\Big(Q(\widehat F^*(v_j\mid p_n),\widehat p_n;\theta)-y_j\Big)^2\widehat W(n,j),
 \end{align*}
 subject to the constraints $\theta\in\Theta$ and  $0\leq y_1\leq \ldots\leq y_J\leq1$. Here, $\widehat W(n,j)$ denotes the estimated GMM-efficient weights, which assign zero weight to cross-$(j,n)$ restrictions, as described in the Supplement.\footnote{The asymptotic distribution of $\widehat F^*(v_j\mid p_n)$ is determined by that of the kernel estimators of the density of bids, which are asymptotically independent across different $j$'s and $n$'s.} Note that the optimization problem is quadratic in the $y$'s and, therefore, can be solved in two steps by first concentrating out $F(v_j)$:
 \begin{align*}
     (\widehat{F}(v_1; \theta),\ldots,\widehat{F}(v_J; \theta))\coloneqq 
     \underset{0\leq y_{1}\leq\cdots\leq y_{J}\leq 1}{\arg\min}\sum_{n\in\mathcal N}\sum_{j=1}^J\Big(Q(\widehat F^*(v_j\mid p_n),\widehat p_n;\theta)-y_j\Big)^2\widehat W(n,j),
 \end{align*}
 where $\widehat{F}(v_j; \theta)$ denotes the estimator of the CDF of private costs for a given $\theta$. In the second step,  $\widehat{\theta}$ minimizes
\begin{align*}
\sum_{n\in\mathcal N}\sum_{j=1}^J\Big(Q(\widehat F^*(v_j\mid p_n),\widehat p_n;\theta)-\widehat{F}(v_j; {\theta})\Big)^2\widehat W(n,j).
\end{align*}
The first step is a quadratic programming problem under linear inequality constraints, and the second step is a single-dimensional optimization problem that can be solved using a grid search.
 
 Standard errors can be computed using the usual efficient GMM formulas. In the Supplement, we describe how to use the bootstrap to compute the weights $\widehat W(n,j)$, accounting for the uncertainty in the estimation of the CDF of private costs, including the estimation of the inverse bidding function and the CDF and PDF of the bids.\footnote{Because of the estimation of the inverse bidding strategy in the first step, the GMM-efficient weights depend on the bidding strategy’s derivatives. A plug-in estimator for these derivatives requires an additional smoothing parameter and converges slowly. By contrast, our bootstrap-based approach avoids both of these complications.}

Using \eqref{eq:F=Q}, \eqref{eq:F_star_hat_definition}, and $\widehat{\theta}$, we construct an estimator of the marginal distribution $F(\cdot)$ by averaging over the  values of $n$ in $\mathcal{N}$: \[
\widehat{F}\left(v\right)\coloneqq\frac{1}{\left|\mathcal{N}\right|}\sum_{n\in\mathcal{N}}Q\left(\widehat{F}^{*}\left(v\mid\widehat{p}_{n}\right),\widehat{p}_{n};\widehat{\theta}\right),
\] where $|\mathcal N|$ denotes the number of elements in $\mathcal{N}$. Finally, we construct a plug-in estimator \[
\widehat{\kappa}_{n}\coloneqq\int_{\widehat{\underline{v}}}^{\widehat{\overline{v}}}C_{2}\left(\widehat{F}\left(v\right),\widehat{p}_{n};\widehat{\theta}\right)\left(\left(1-C\left(\widehat{F}\left(v\right),\widehat{p}_{n};\widehat{\theta}\right)\right)^{n-1}-\left(1-\widehat{p}_{n}\right)^{n-1}\right)dv
\]
for the entry cost $\kappa_n$, where $\left(\widehat{\underline{v}},\widehat{\overline{v}}\right)$ are the minimum and maximum of the pseudo private costs, respectively.

\section{Estimation results}\label{sec:empirical}



\subsection{Data}\label{subsec:data}

The \cite{li2009entry} data for TxDoT ``mowing highway right-of-way'' auctions of awarded projects (held between January 2001 and December 2003) include the following information on each auction: the number of potential bidders, submitted bids, the engineer's estimate, the number of items in a contract, and whether it is a local, state, or interstate contract.\footnote{
The data set does not contain any information on call-for-bids in that period of time. The current TxDoT ``contract letting resource hub'' website provides records on auctions of awarded projects and publicly available information for future lettings, but reveals little about the historical frequency of auction failure.} Although the number of potential bidders varies between $3$ and $26$, in many cases the number of auctions and the number of submitted bids are small. 

The number of items in a project varies between $1$ and $7$. As explained in \cite{li2009entry}, these projects usually involve a main task, ``type-II full width mowing'', with additional tasks such as strip mowing, spot mowing, litter pickup and disposal, and sign installation. Projects with a single item are therefore more likely to be comparable, as they presumably involve the same main task. For multi-item projects, the data do not contain information on the type of additional tasks, which may vary across auctions with the same number of items. \cite{li2009entry} also explain that there can be substantial differences between local, state, and interstate jobs. State jobs are auctioned by the state agency, potentially with different requirements for preparing bid proposals. Interstate jobs can be more complicated because of a higher traffic volume. 

To make our sample as homogeneous as possible, we focus only on local projects with one item. 
We further homogenize bids in our sample by the engineer's estimate; thus, bids are fractions of the engineer's estimate. In the Supplement, we show that the engineer's estimate explains the overwhelming portion of bid variation, and once bids are standardized by the estimate, observed project characteristics explain very little of the remaining variation. We exclude values of $n$ with fewer than $30$ submitted bids to ensure that the CDFs of private costs conditional on entry are precisely estimated. Our final sample includes auctions with the number of potential bidders $n=9,10,12,13,14$.

Table \ref{tab:summary} reports the summary statistics for our sample. The average engineer's estimate is between $\$77{,}493$ and $\$113{,}838$, depending on the number of potential bidders. The variation is substantial, with standard deviations ranging from $\$27{,}760$ to $\$48{,}493$. The average submitted bid as a fraction of the engineer's estimate is between $1.0$ and $1.11$, depending on the number of potential bidders. The minimum and maximum bids are $0.7$ and $1.56$, respectively. The average winning bid as a fraction of the estimate ranges from $0.90$ to $1.01$. The percentage of winning bids above the engineer's estimate is between $13.3\%$ and $43.8\%$, depending on the number of potential bidders, confirming that the reserve price is not enforced. The largest winning bid as a fraction of the estimate is $1.25$, and the smallest is $0.70$.

\begin{table}[]
\caption{Summary statistics for the sample of local projects with one item and at least 30 bids for different numbers of potential bidders}
    \label{tab:summary}
    \centering
  \footnotesize
\begin{tabular}{rrrrrr}
  \toprule
  \textbf{Potential bidders} & \textbf{9} & \textbf{10} & \textbf{12} & \textbf{13} & \textbf{14} \\\midrule
    &   &   &   &   &   \\
  Number of auctions & 15 & 15 & 16 & 11 & 10 \\
  Number of bids & 40 & 41 & 43 & 41 & 40 \\
    &   &   &   &   &   \\
  Engineer's estimate (dollars) &   &   &   &   &   \\
  mean & 104,813 & 89,489 & 113,838 & 84,025 & 77,493 \\
  std.dev & 44,333 & 39,547 & 48,493 & 31,496 & 27,760 \\
  std.err & 11,447 & 10,211 & 12,123 & 9,496 & 8,778 \\
    &   &   &   &   &   \\
  Bids (fraction of engineer's estimate) &   &   &   &   &   \\
  mean & 1.068 & 1.004 & 1.106 & 1.037 & 1.057 \\
  std.dev & 0.165 & 0.172 & 0.167 & 0.204 & 0.169 \\
  min & 0.815 & 0.721 & 0.799 & 0.703 & 0.722 \\
  max & 1.445 & 1.471 & 1.470 & 1.556 & 1.530 \\
    &   &   &   &   &   \\
  Winning bids (fraction of engineer's estimate) &   &   &   &   &   \\
  mean  & 0.952 & 0.921 & 1.011 & 0.898 & 0.959 \\
  std.dev  & 0.065 & 0.131 & 0.129 & 0.153 & 0.117 \\
  min  & 0.815 & 0.721 & 0.799 & 0.703 & 0.722 \\
  max  & 1.106 & 1.207 & 1.249 & 1.148 & 1.124 \\
  \% above estimate & 13.3 & 13.3 & 43.8 & 18.2 & 40.0 \\
  std.err for \% above estimate & 8.8 & 8.8 & 12.4 & 11.6 & 15.5 \\
    &   &   &   &   &   \\\bottomrule
\end{tabular}
\end{table}

\subsection{Entry probabilities, signal informativeness, entry costs, and distribution of private costs}\label{subsec:estimation_results_hard}

The estimated probabilities of entry conditional on having at least two active bidders and the implied estimated unconditional equilibrium probabilities of entry $\widehat p_n$ are displayed in Figure \ref{fig:entry_probability}.\footnote{The probability of entry conditional on having at least two active bidders is estimated by the sample analogue of $\E[N^*_{l}/N_{l}\mid  N_l=n]$.} The estimated implied probability of entry varies between $15\%$ and $27\%$, depending on the number of potential bidders.  The difference between the estimated probability of entry conditional on at least two active bidders and the implied unconditional probability $\widehat p_n$ can be substantial and should not be ignored. Both probabilities are non-monotone in the number of potential bidders. Due to the requirement of at least two active bidders, the relationship between the entry probability and the number of potential bidders can be non-monotone even for the same entry cost $\kappa$. This is unlike the case studied in \cite{MSX}, where for the same entry cost, the equilibrium entry probability is monotonically decreasing in the number of potential bidders.

\begin{figure}
    \centering
    \begin{subfigure}{0.45\textwidth}
      \includegraphics[width=\linewidth]{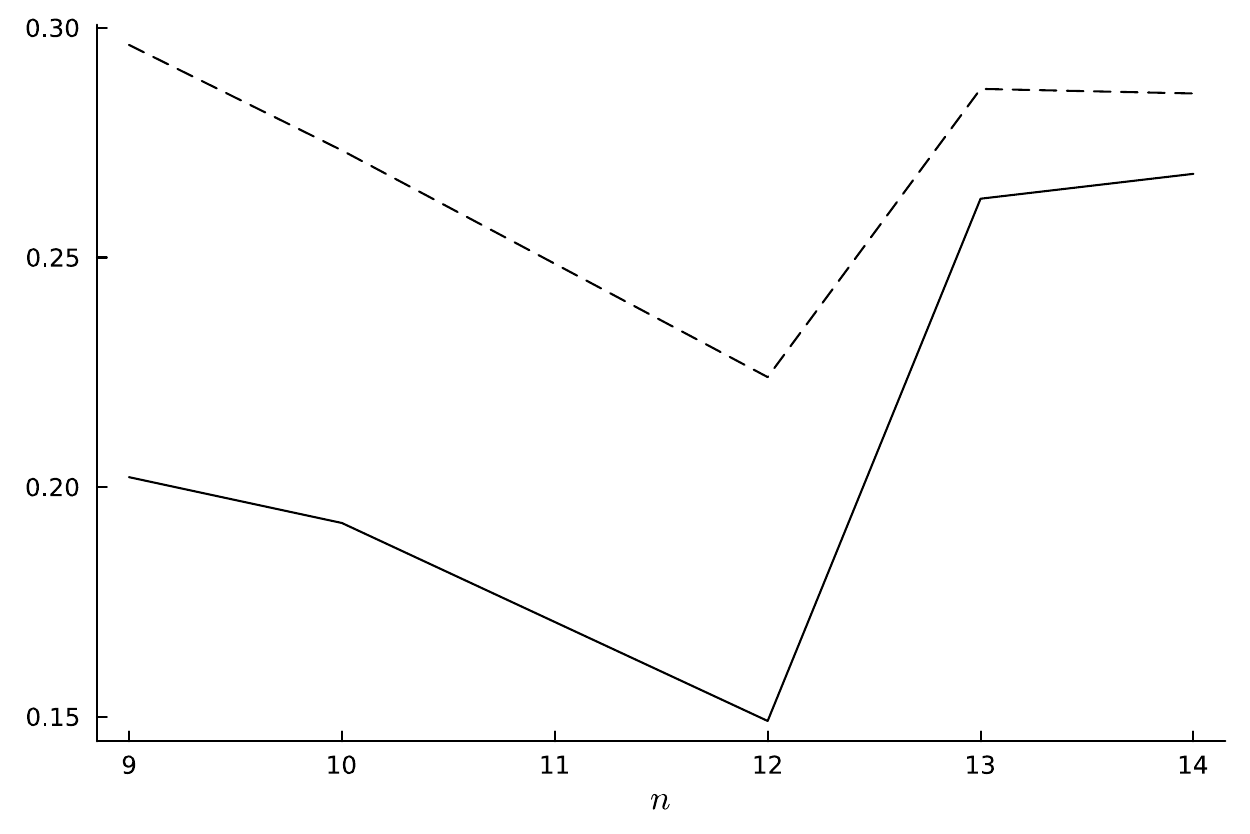}
      \caption{The estimated probabilities of entry conditional on at least two active bidders (dashed line) and the estimated unconditional entry probabilities $p_n$ (solid line) for different numbers of potential bidders $n$ \label{fig:entry_probability}}
    \end{subfigure}
    \hspace{1ex}
    \begin{subfigure}{0.45\textwidth}
      \includegraphics[width=\linewidth]{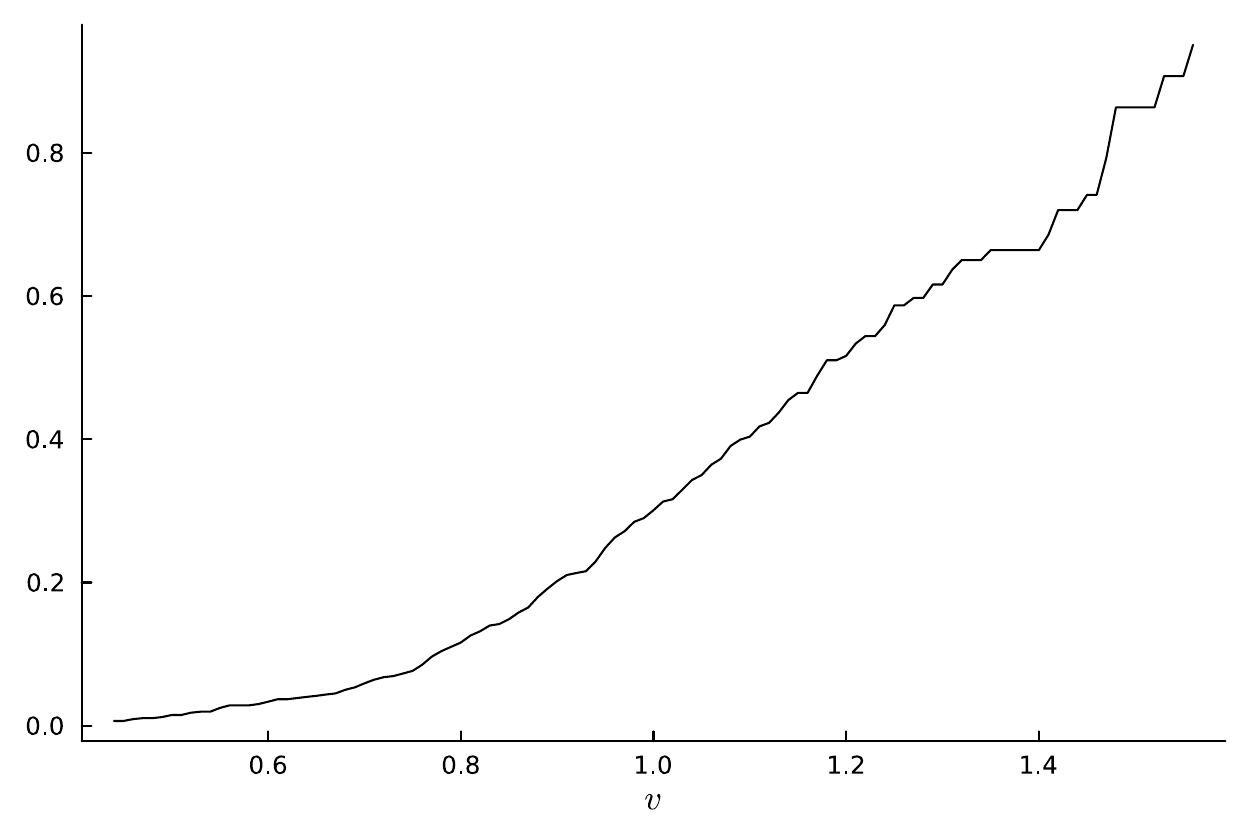}
      \caption{The estimated CDF $F(v)$ of private costs \label{fig:F}\\ \\ \\ }
    \end{subfigure}
    \begin{subfigure}{0.45\textwidth}
      \includegraphics[width=\linewidth]{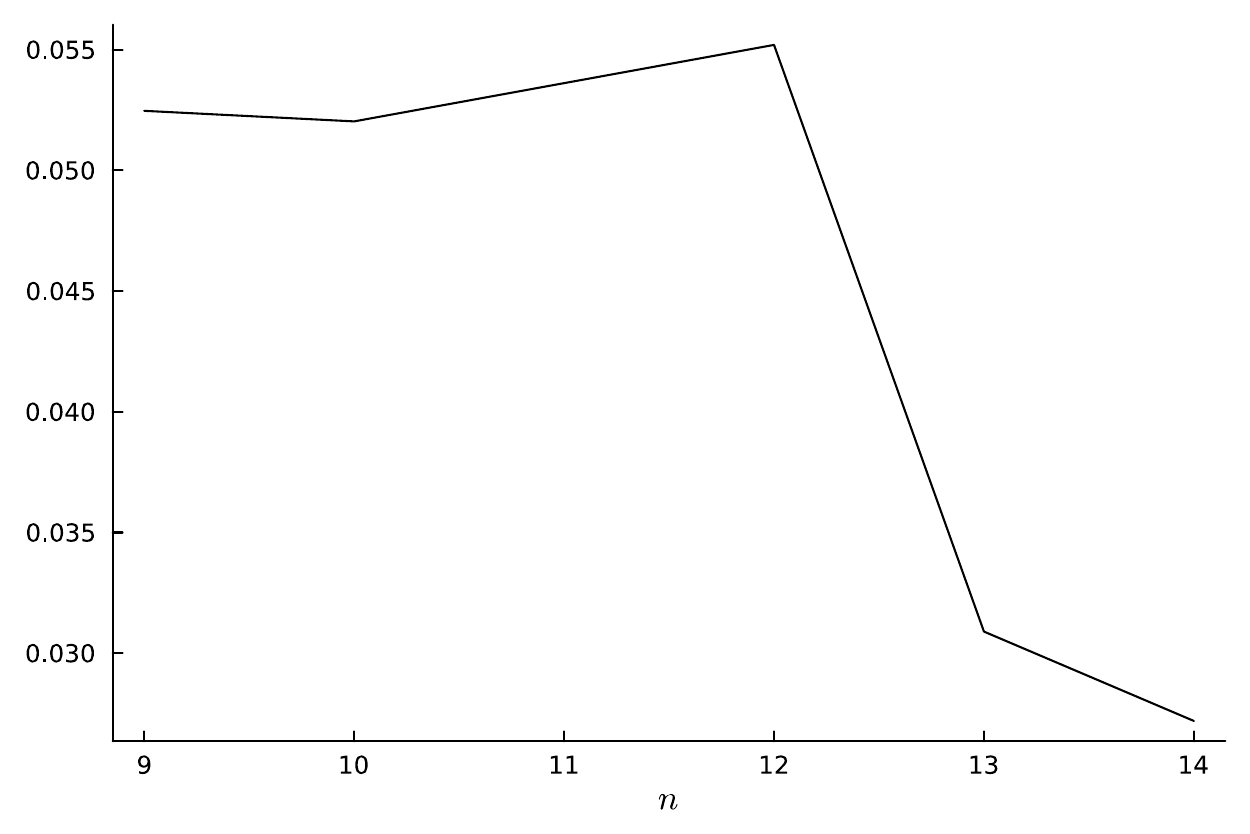}
      \caption{The estimated entry  costs $\kappa_n$ for different numbers of potential bidders $n$ \label{fig:entry_cost}}
    \end{subfigure}
  \caption{Estimates of the entry probability $p_n$, entry cost $\kappa_n$, and  CDF $F(\cdot)$ of private costs \label{fig:estimates}}
  \end{figure}

To estimate the PDF of bids, which is required for estimating the inverse bidding function, we use the triweight kernel in the construction of $\widehat{g}\left(b\mid n\right)$. We follow the rule of thumb for bandwidth selection and set the bandwidth to  $3.15 \cdot\widehat{\sigma}_{b,n}(\sum_{l:N_{l}=n}N_{l}^{*})^{-1/5}$, where $\widehat{\sigma}_{b,n}$ is the sample standard deviation of the bids in auctions with $n$ potential bidders
and $\sum_{l:N_{l}=n}N_{l}^{*}$ is the sample size.\footnote{When computing the confidence intervals, we change the sample size component to $(\sum_{l:N_{l}=n}N_{l}^{*})^{-1/5-\epsilon}$  for minor under-smoothing with $\epsilon=1/17$.} 
After the correction for at least two active bidders, we obtain monotonically increasing estimates of the inverse bidding functions. The estimated support for the distribution of private costs is $[0.47, 1.56]$.

We choose the Frank copula to model the dependence between private costs and signals, and we use the efficient two-step GMM estimator to estimate the copula parameter $\theta$ and the marginal CDF of private costs. We use a grid of private cost values ranging from $0.6$ to $1.5$, with increments of $0.05$, to set up the estimating equations for $\theta$. Table \ref{table:theta_rho} shows the estimates of $\theta$  and the corresponding Spearman rank correlation coefficient $\rho$ obtained using the undersmoothing bandwidth. According to our estimates, the signals are moderately informative with a $\theta$ estimate of $5.54$, which corresponds to Spearman's $\rho$ of $0.68$. We construct the confidence interval for $\rho$ by converting the confidence interval for $\theta$ using the one-to-one relationship between the two parameters.  The resulting 95\% confidence interval for Spearman's $\rho$ is $[0.61, 0.74]$.

Note that the Frank copula becomes the independence copula when $\theta=0$. However, with a $t$-ratio of $11.59$, we can reject independence conclusively. In other words, the data reject the entry model of \cite{levin1994equilibrium}, in which bidders are completely uninformed about their valuations and enter randomly. 
\begin{table}
\caption{The estimates of the copula parameter  and its implied Spearman rank correlation with their standard errors (in parentheses) and 95\% confidence intervals with the undersmoothing bandwidth}
\label{table:theta_rho}
\centering
\footnotesize
\begin{tabular}{rcc}
\toprule
 & \bf{Estimate} & \bf{95\% confidence interval} \\
 \midrule
Copula parameter $\theta$ & $5.54$ & $[4.60, 6.48]$ \\
 & $(0.48)$ & \vspace{0.7ex} \\
Spearman correlation $\rho$ & $0.68$ & $[0.61, 0.74]$ \\
\bottomrule
\end{tabular}
\end{table}

 The estimated CDF of private costs $F(\cdot)$ is shown in Figure \ref{fig:F}. The engineer's estimate ($v=1$) corresponds to the $29$th percentile of the distribution of private costs; that is, the probability of drawing a private cost above the engineer's estimate (the nominal reserve price) is $71\%$. This may help explain why the reserve price was not enforced.
 
 We estimate the entry cost parameter $\kappa_n$  for each value of the number of potential bidders $n$ using equation \eqref{eq:entry_cost_ID}. The estimates are $\widehat\kappa_n=0.053, 0.052, 0.055, 0.031$, and $0.027$  for $n=9,10,12,13$, and $14$, respectively. The results are plotted in Figure \ref{fig:entry_cost}. For example, $\widehat{\kappa}_9=0.053$ implies that the entry cost in markets with $9$ potential bidders is estimated to be $5.3\%$ of the engineer's estimate. In auctions with $9$--$12$ potential bidders, the entry costs are at least $5.2\%$ of the engineer's estimate: $\$5{,}499$, $\$4{,}656$, and $\$6{,}283$ for $n=9$, $10$, and $12$, respectively. The entry cost is lower for auctions with $13$--$14$ potential bidders: $3.1\%$ and $2.7\%$, or $\$2{,}596$ and $\$2{,}107$ for $n=13$ and $14$, respectively. The entry cost decreases sharply for $n\geq13$. The negative association between the entry cost and the number of potential bidders may explain the variation in $n$ across auctions. These findings are intuitive: projects whose costs are easier to assess are expected to attract more potential bidders.

For robustness, we also estimate the model using the Joe copula as an alternative specification. Under the Joe copula, the estimate of Spearman's $\rho$ is $0.6$, which is close to the Frank copula estimate of $0.68$. The remaining estimation results and counterfactual conclusions are similarly robust to this alternative. Details are provided in Section S2.2 of the online Supplement.

\section{Format comparisons}\label{sec:counterfactuals}

In this section, we discuss the effect of changing the format from bid requirements to reserve prices on the expected winning bid and probability of auction failure. Recall that under the bid requirement format, an auction fails when fewer than two bidders enter. Under the reserve price format, an auction fails when there are no active bidders, that is, when all potential bidders either drew signals above the entry threshold or had private costs above the reserve price.

\subsection{Fixed signal informativeness}\label{sec:counterfac}

First, we conduct the counterfactual calculations using the estimated level of signal informativeness $\widehat\theta=5.54$ or $\widehat\rho=0.68$. For the reserve price format, we set the reserve price $r=1.0$ (the nominal reserve price in TxDoT auctions). Recall that according to our estimates, this reserve price can be perceived as aggressive: the probability of drawing a private cost below it is only $29\%$.\footnote{In this model with endogenous entry, to derive the optimal reserve price, one has to take a stance about the expected procurement cost when an auction fails. After specifying the expected procurement cost in the event of failure, one can derive the optimal reserve price with respect to the unconditional cost of procurement. However, specifying the expected procurement cost in the event of failure requires additional assumptions that may be difficult to justify.} Initially, we keep the entry costs $\kappa_n$ at their estimated levels for each $n$.  The results are reported in Table \ref{tab:counterfact}.

\begin{table}[]
\caption{Counterfactual entry and bidding probabilities, auction failure probabilities, and expected winning bids under the two formats for different numbers of potential bidders $n$ at the estimated entry costs $\kappa_n$ }
    \label{tab:counterfact}
    \centering
  \footnotesize
  \renewcommand{\arraystretch}{1.1} 
\begin{tabular}{>{\centering\arraybackslash}>{\centering\arraybackslash}p{1cm}>{\centering\arraybackslash}p{1.4cm}>{\centering\arraybackslash}p{1.4cm}>{\centering\arraybackslash}p{1.4cm}>{\centering\arraybackslash}p{.1cm}>{\centering\arraybackslash}p{1.4cm}>{\centering\arraybackslash}p{1.4cm}>{\centering\arraybackslash}p{1.4cm}>{\centering\arraybackslash}p{1.4cm}}
  \toprule
  & \multicolumn{3}{c}{\textbf{Bid requirement}} & & \multicolumn{4}{c}{\textbf{Reserve price}} \\
  \cline{2-4}\cline{6-9}
  $n$ & {prob. entry}& {prob. failure} & {expect. win. bid } &  & {prob. entry}& {prob. bidding} & {prob. failure} & {expect. win. bid }  \\ \midrule
 9 & 0.186 & 0.48 & 0.958 &  & 0.201 & 0.139 & 0.259 & 0.921 \\
  10 & 0.179 & 0.442 & 0.948 &  & 0.190 & 0.133 & 0.239 & 0.915 \\
  12 & 0.134 & 0.508 & 0.938 &  & 0.162 & 0.116 & 0.227 & 0.908 \\
  13 & 0.257 & 0.116 & 0.897 &  & 0.230 & 0.156 & 0.111 & 0.874 \\
  14 & 0.263 & 0.084 & 0.882 &  & 0.237 & 0.159 & 0.088 & 0.862 \\\bottomrule
\end{tabular}
\end{table}

Under the bid requirement format, all entering bidders bid, so the probabilities of entry and bidding coincide. Under the reserve price format, only entering bidders with private costs below the reserve price bid. The entry probability is lower under the bid requirement format in auctions with $9$--$12$ potential bidders and higher in auctions with $13$--$14$ potential bidders. Nevertheless, the bidding probability is always higher under the bid requirement format, and the differences can be substantial. For example, in auctions with $13$ potential bidders, the difference in bidding probabilities between the two formats is $10.1$ percentage points. Despite this, the auction failure probability is substantially higher under the bid requirement format for $n$ between $9$ and $12$, and very similar for $n$ equal to $13$ or $14$. In the former case, the auction failure probability under the bid requirement format exceeds that under the reserve price format by $22.1$, $20.3$, and $28.1$ percentage points for $n=9$, $10$, and $12$, respectively. 

The substantial differences in auction failure probabilities are driven by the probability of exactly one bidder entering, in which case the auction is canceled under the bid requirement format: $32\%$, $30\%$, and $33\%$ for $n=9$, $10$, and $12$, respectively. That probability is substantially lower for $n=13$ and $14$: $9\%$ and $7\%$, respectively. To understand this difference between auctions with $9$--$12$ and those with $13$--$14$ potential bidders, note that the latter have substantially higher entry probabilities, which can in turn be explained by lower entry costs. Recall that the entry costs are over $5.2\%$ for $n$ between $9$ and $12$ and under $3.1\%$ for $n$ equal to $13$ or $14$.

Finally, the results in Table \ref{tab:counterfact} show that the expected winning bid is lower under the reserve price format by $2.0\%$--$3.7\%$ of the engineer's estimate, depending on the number of potential bidders. Therefore, in the TxDoT case, switching to the reserve price format leads not only to lower (or similar) auction failure probabilities but also to a reduction in the expected cost of procurement.  

Next, to eliminate the impact of varying entry costs, we consider the same counterfactual outcomes while setting the entry cost to its estimated weighted average across $n$: $0.046$, or $4.6\%$ of the engineer's estimate. We also extend the number of potential bidders $n$ to $4$--$25$ to consider small and large markets. The results are shown in Figure \ref{fig:counter}.

\begin{figure}
  \centering
  \begin{subfigure}{0.45\textwidth}
    \includegraphics[width=\linewidth]{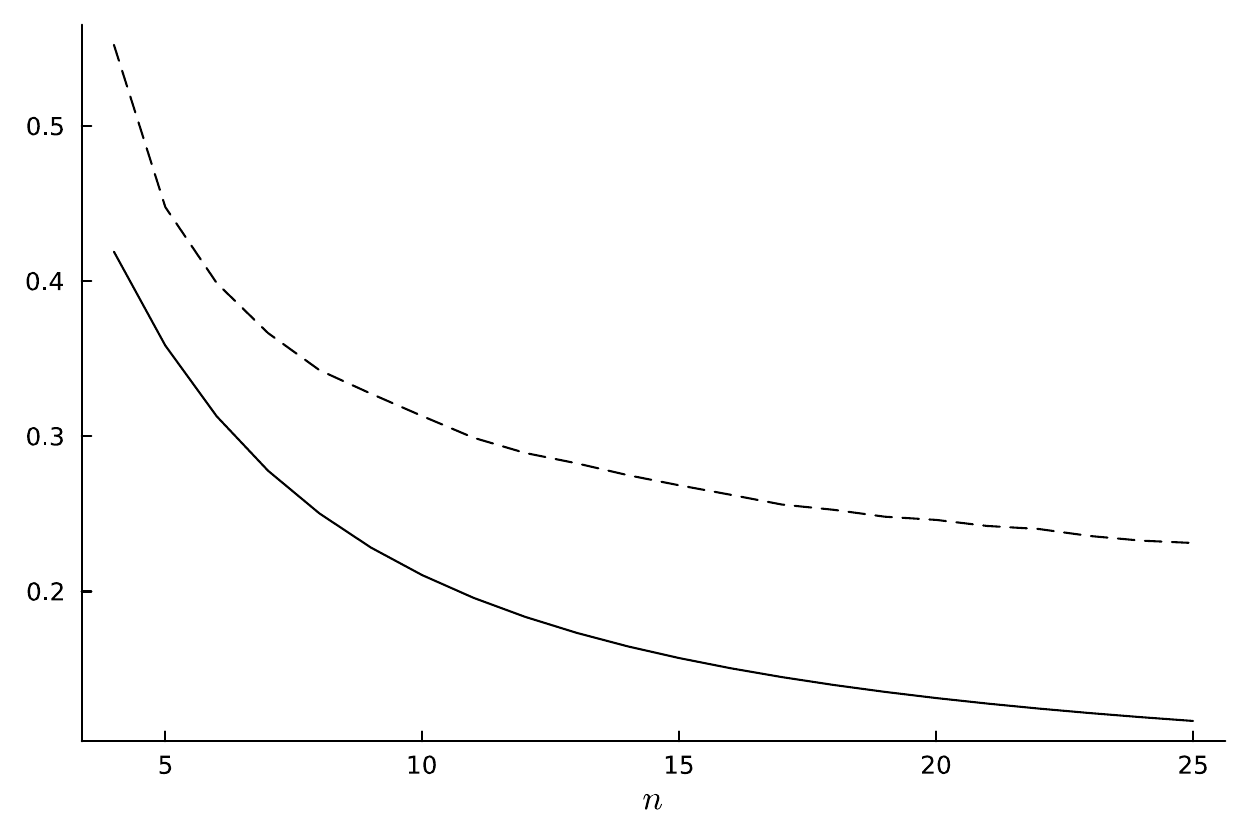}
    \caption{Auction failure probability \label{fig:counter_auction_failure} }
  \end{subfigure}
  \hspace{1ex}
  \begin{subfigure}{0.45\textwidth}
    \includegraphics[width=\linewidth]{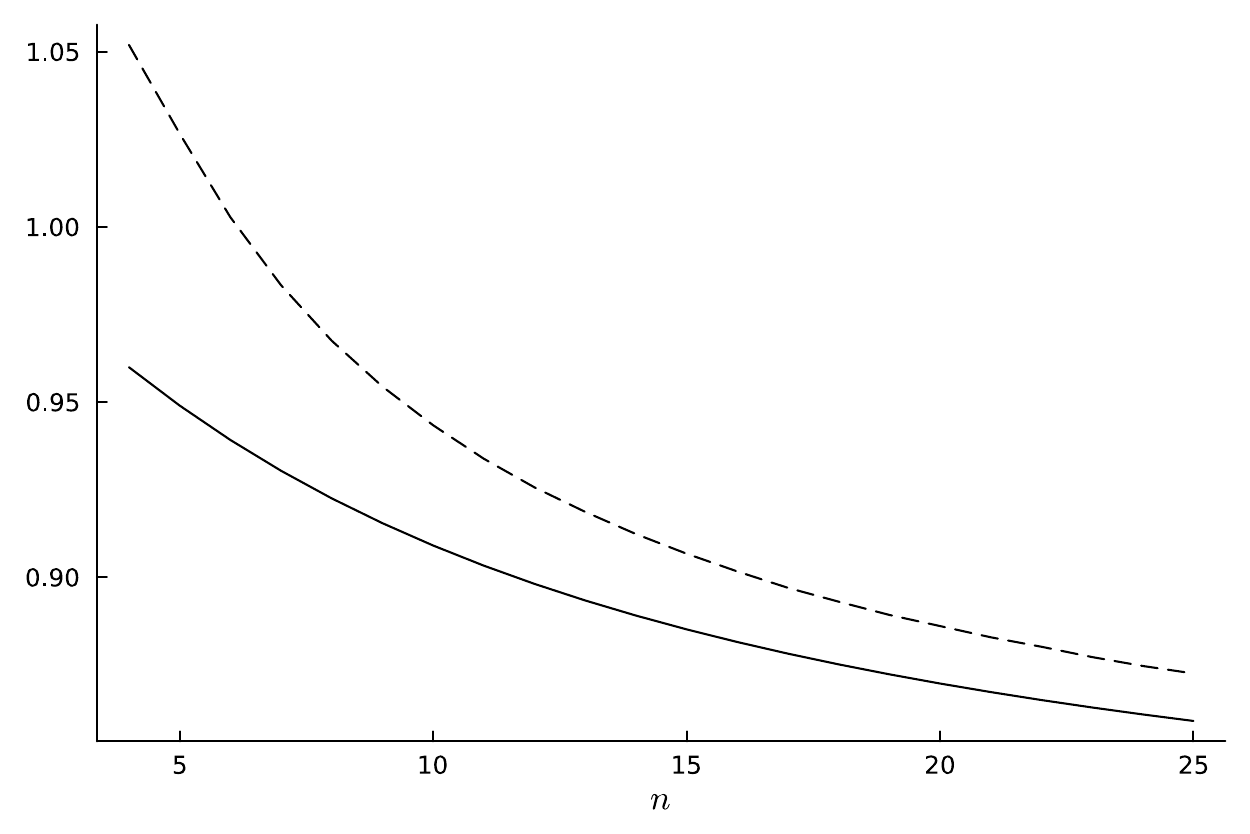}
    \caption{Expected winning bid \label{fig:counter_winning_bid}}
  \end{subfigure}
\caption{Counterfactual auction failure probabilities and expected winning bids for different numbers of potential bidders $n$ and estimated weighted average entry cost $\kappa=0.046$ under the bid requirement (dashed line) and reserve price (solid line) formats; the reserve price $r=1$\label{fig:counter}}
\end{figure}

For each number of potential bidders $n$, the reserve price format results in an auction failure probability that is $8.6$--$13.3$ percentage points lower than under the bid requirement format. In larger markets with $n\geq 14$, the difference is over $11$ percentage points in favor of the reserve price format. Similarly, the reserve price format dominates in terms of the expected winning bid by $1.4\%$--$9.2\%$ of the engineer's estimate, with larger differences for smaller $n$. 

We conclude that, at the estimated level of signal informativeness $\rho=0.68$, the reserve price format is preferred in terms of auction failure probability and expected winning bid, even at the aggressively set reserve price corresponding to the $29$th percentile of the distribution of private costs. This conclusion holds as long as entry costs are sufficiently high (over $3.1\%$ of the engineer's estimate). Such entry cost levels result in a low entry probability and, consequently, a high probability of exactly one active bidder, in which case the auction is canceled under the bid requirement format. This holds even though, under the reserve price format, a non-negligible fraction of entering bidders have private costs above an aggressively set reserve price. For the bid requirement format to be preferred, entry costs must be sufficiently low to substantially reduce the probability of exactly one bidder entering.

\subsection{Varying signal informativeness}\label{subsec:varying_info}

Next, we compare the procurement outcomes for the bid requirement and reserve price formats in different information environments (i.e., for different levels of signal informativeness as measured by Spearman's $\rho$). We focus on auctions with ten potential bidders; the results for other values of $n$ are qualitatively similar. In the calculations below, we set the entry cost to its estimated level for $n=10$: $5.2\%$ of the engineer's estimate.

\begin{figure}
    \centering
    \begin{subfigure}{0.50\textwidth}
    \includegraphics[width=\linewidth]{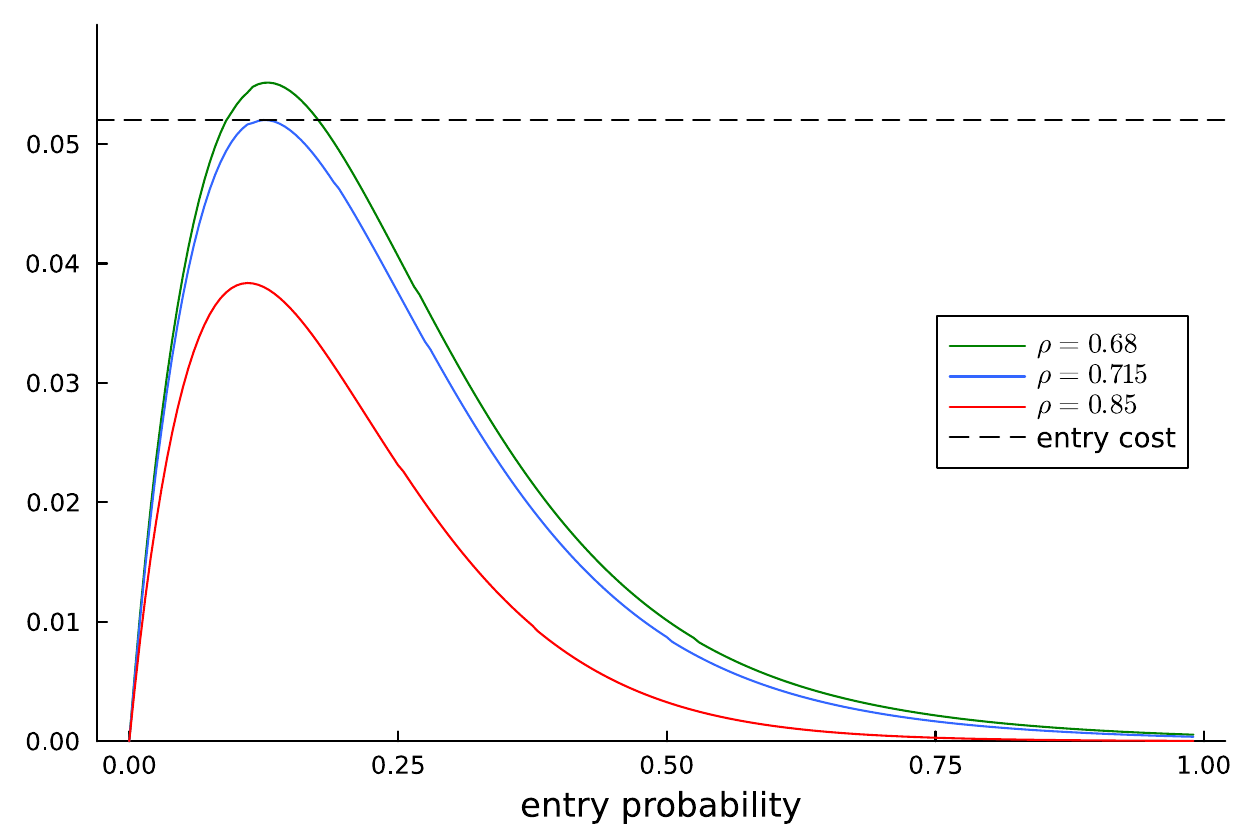}
    \caption{Bid requirement \label{fig:marginal_at_least_2} }
  \end{subfigure}
  \vspace{1em}
  \begin{subfigure}{0.50\textwidth}
    \includegraphics[width=\linewidth]{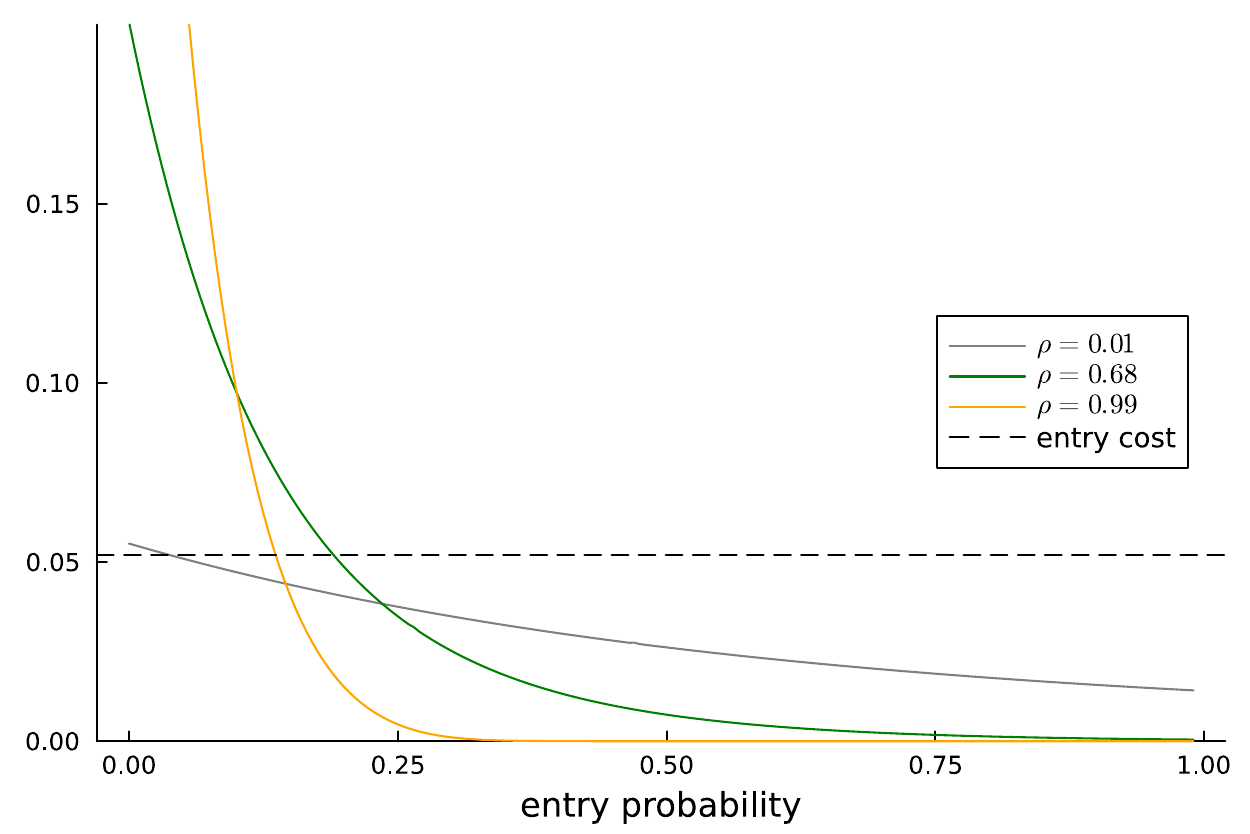}
    \caption{Reserve price \label{fig:marginal_reserve}}
  \end{subfigure}
\caption{Entry cost and marginal entrant's expected revenue from entry under the bid requirement and reserve price formats in auctions with $n=10$ potential bidders for different entry probabilities and levels of signal informativeness as measured by Spearman's rank correlation $\rho$}
    \label{fig:marginal_10_entry}
\end{figure}

For the bid requirement format, Figure \ref{fig:marginal_at_least_2} plots the marginal entrant's expected revenue from entry for different entry probabilities and levels of signal informativeness (measured by Spearman's rank correlation $\rho$ between private costs and signals). The figure shows that, for all entry probabilities, the marginal entrant's expected revenue decreases with $\rho$; that is, the expected revenue is lower under more informative signals. Consequently, the stable (right) equilibrium entry probability decreases with signal informativeness. Moreover, above a sufficiently high level of informativeness ($\rho=0.715$ in this case), entry ceases completely, resulting in certain auction failure under the bid requirement format. Figure \ref{fig:marginal_reserve} plots the same objects under the reserve price format. The entry probability is strictly positive for all levels of signal informativeness; however, it is non-monotone in $\rho$. Moreover, for sufficiently high $\rho$, the reserve price format can support much larger entry costs than the bid requirement format: one would observe strictly positive entry probabilities under the reserve price format but no entry under the bid requirement format.

\begin{figure}
    \centering
    \begin{subfigure}{0.51\textwidth}
    \includegraphics[width=\linewidth]{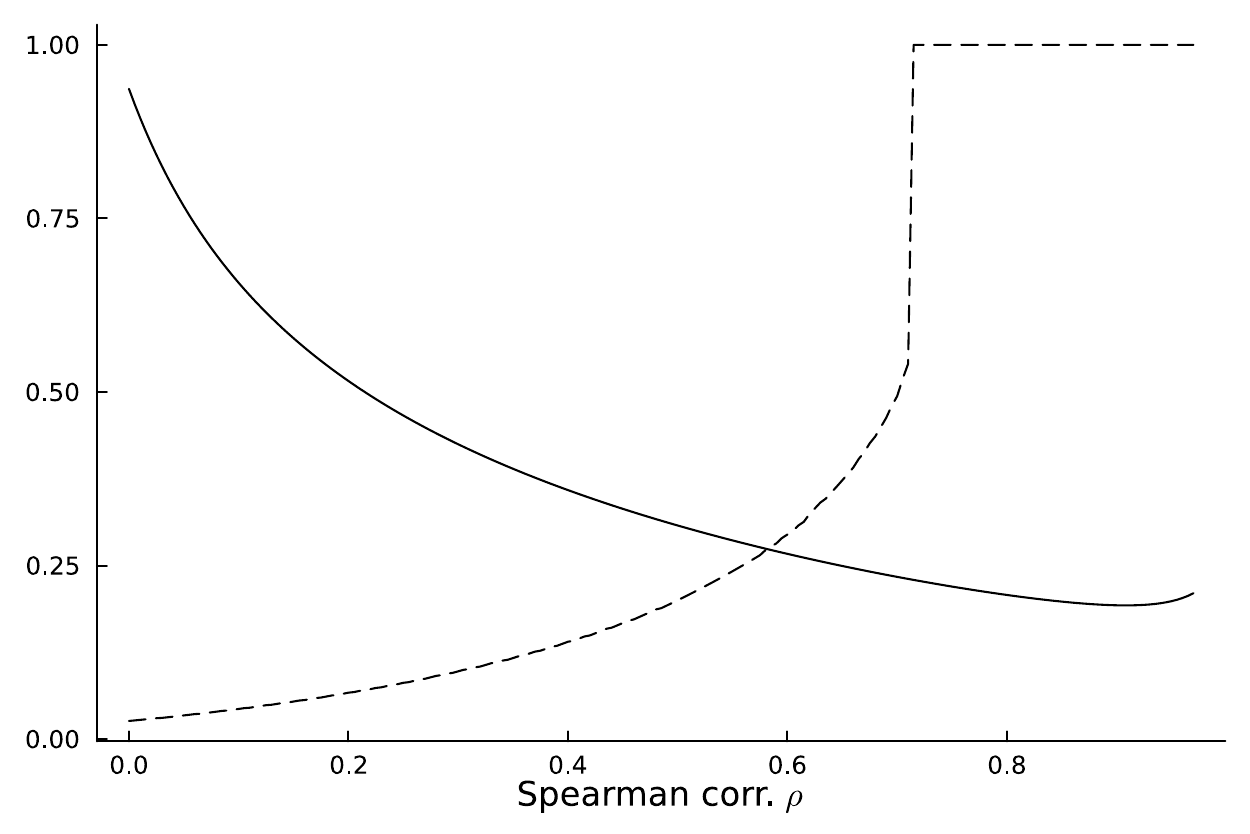}
    \caption{Auction failure probability \label{fig:failure_10} }
  \end{subfigure}
  \vspace{1em}
  \begin{subfigure}{0.51\textwidth}
    \includegraphics[width=\linewidth]{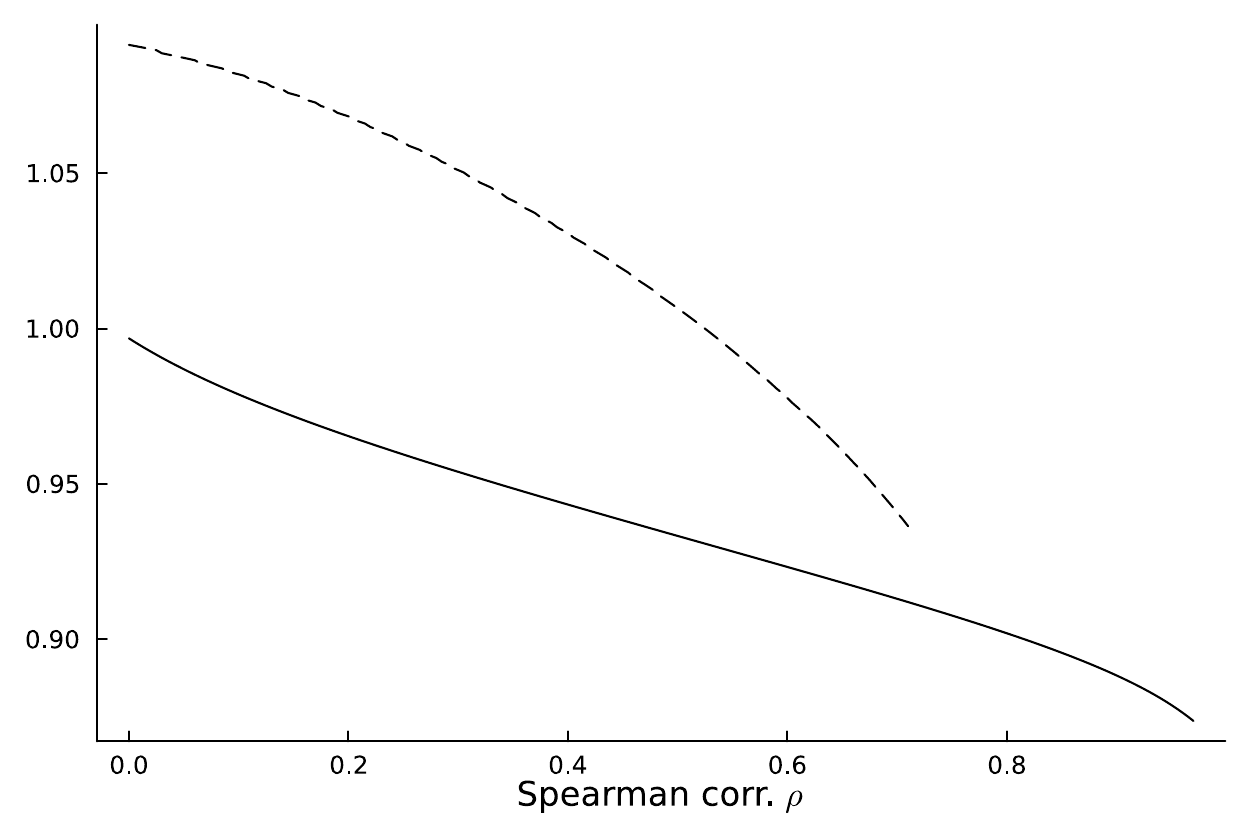}
    \caption{Expected winning bid \label{fig:win_bid_10}}
  \end{subfigure}
\caption{Probabilities of auction failure and expected winning bids under the bid requirement (dashed line) and reserve price (solid line) formats in auctions with $n=10$ potential bidders for different levels of signal informativeness as measured by Spearman's rank correlation $\rho$}
    \label{fig:failure_win_bid_10}
\end{figure}

Figure \ref{fig:failure_win_bid_10} reports the auction failure probabilities and expected winning bids under the two formats for different levels of signal informativeness $\rho$. Under the bid requirement format, there is no entry if signal informativeness $\rho$ exceeds $0.715$, and therefore the auction fails with probability one, confirming the prediction of Proposition \ref{prop:limit_theta_infty}\ref{prop:limit_theta_bid_req}. In such cases, the expected winning bid is undefined and is not reported in Figure \ref{fig:win_bid_10}. Under the reserve price format, the auction failure probability is strictly below one for all levels of signal informativeness, consistent with Proposition \ref{prop:limit_theta_infty}\ref{prop:limit_theta_res_price}. In particular, for $\rho>0.58$, the reserve price format yields a lower auction failure probability. Moreover, the reserve price format results in a lower expected cost of procurement (winning bid) at all levels of signal informativeness.

Recall that under both formats, the auction failure probability is monotonically decreasing in the entry probability. One of the stark differences between the two formats is that the entry probability is decreasing in signal informativeness under the bid requirement format, while under the reserve price format it is non-monotone, increasing over most of the range. In particular, the entry probability under the reserve price format peaks at highly (but not perfectly) informative signals with $\rho=0.91$. This non-monotonicity arises because the marginal entrant's expected revenue varies non-monotonically with signal informativeness, as is also evident from the curves in Figure \ref{fig:marginal_reserve}.

Lastly, recall that for these counterfactuals, the reserve price is set at the $29$th percentile of the distribution of private costs. With a less restrictive reserve price $r=1.202$ (which corresponds to the median private cost), the auction failure probability under the reserve price format is between $8\%$ and $14\%$. Moreover, for $\rho > 0.315$, it falls below that under the bid requirement format. Although a higher reserve price increases the expected winning bid, the reserve price format nevertheless dominates the bid requirement format in this metric for $\rho<0.565$.

We conclude that the reserve price format can substantially outperform the bid requirement format when the entry cost is sufficiently high or signals are sufficiently informative. In particular, when signals are highly informative, the bid requirement format leads to auction failure with probability one. The reserve price format is superior in this regard because it can support a broader range of signal informativeness levels and entry costs while maintaining strictly positive bidding probabilities. 

\section{Soft bid requirement format}\label{sec:soft_bid_req}

In this section, we study a variation of the bid requirement format, which we refer to as the soft bid requirement format (``soft format'').\footnote{This extension, along with the ``soft'' and ``hard'' terminology, was suggested by an anonymous referee.} In this format, the bid requirement is relaxed: if only one bidder enters, the government may still accept the bid, provided it is not excessive. Specifically, following \citet{li2009entry}, we assume that the government enters as another bidder with a randomly drawn private cost. The single active bidder's bid is deemed excessive if it exceeds the government's bid.\footnote{The government effectively imposes a random reserve price drawn from the equilibrium bid distribution in the event of only one active bidder.} If the bid is not excessive, the government accepts it; otherwise, the auction fails. Henceforth, the bid requirement format studied in the previous sections is referred to as the hard bid requirement format (``hard format''), and the auction model under the hard (soft) format is referred to as the hard (soft) model. We reuse the same notation as in the previous sections, but with new definitions.

\subsection{Model and equilibrium characterization}

Under the soft format, we assume that if only one bidder enters, the government draws a private cost from the distribution of private costs conditional on entry $F^*(\cdot\mid p_n)$, where $p_n$ is the equilibrium entry probability with $n$ potential bidders. The distribution $F^*(\cdot\mid p_n)$ is a natural reference point for assessing whether the submitted bid is excessive. Moreover, this assumption allows for tractable equilibrium characterization while capturing the idea that the government has some knowledge of the distribution of entering bidders' private costs.

Under this specification, in a symmetric equilibrium, the probability that an active bidder with private cost $v$ wins the auction when the entry probability is $p$, and the expected profit of a potential bidder with signal $s$, are respectively given by
\begin{align*}
  H(v\mid p,n) &= \Lambda^{n-1}\left(v\mid p\right)-\left(1-p\right)^{n-1}F^{*}\left(v\mid p\right),\\
  \Pi(p,n,\kappa,s) &=R(p,n,s)-\kappa,
\end{align*}
where 
\[R(p,n,s)\coloneqq \int_{\underline{v}}^{\overline{v}}C_{2}\left(F\left(v\right),s\right)\left\{ \Lambda^{n-1}\left(v\mid p\right)-\frac{\left(1-p\right)^{n-1}}{p}\left(1-\Lambda\left(v\mid p\right)\right)\right\} dv\]
denotes the expected revenue from entry for a potential bidder with signal $s$ in an auction with $n$ potential bidders and entry probability $p$. As previously, $\Lambda(v\mid p)=1-C(F(v),p)$. The equilibrium entry probability $p_n$ is determined by $\Pi(p_n,n,\kappa_n,p_n)=0$. Unlike under the hard format, the expected revenue of the marginal entrant ($s=p$) is strictly positive at $p=0$. Provided that the dependence between private costs and signals is not perfect,
\[\lim_{p\downarrow 0}R(p,n,p)=\int_{\underline{v}}^{\overline{v}}C_{2}\left(F\left(v\right),0\right)\left(1-C_{2}\left(F\left(v\right),0\right)\right)dv>0.\]
This implies that, unless the entry cost is sufficiently high, zero entry is not necessarily an equilibrium under the soft format. The result is intuitive: even when the entry probability is very small and the likelihood of being the only active bidder is high, there is still a positive probability of winning against the government. 

\begin{figure}
  \centering
  \includegraphics[width=0.5\textwidth]{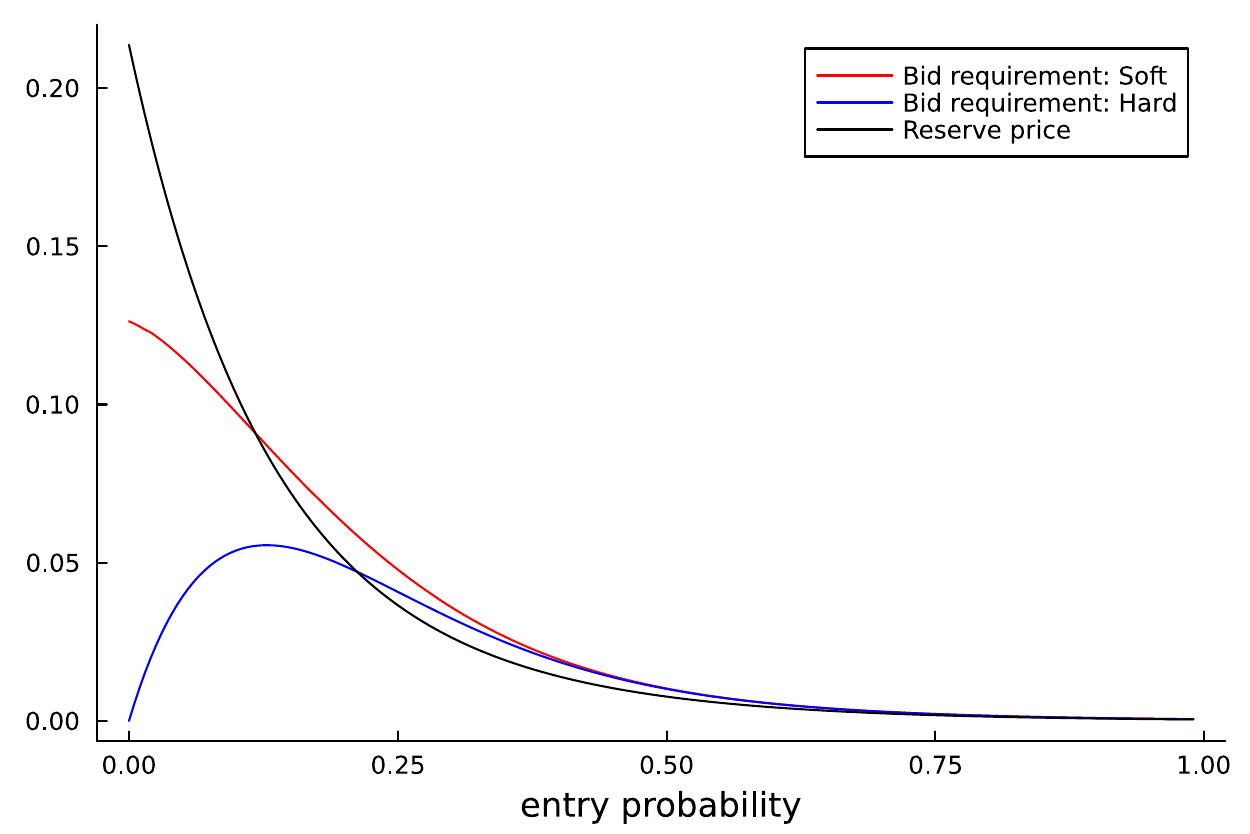}
  \caption{Marginal entrant's expected revenue comparisons under the soft and hard bid requirement and reserve price formats estimated
using the TxDoT data for auctions with 14 potential bidders}
  \label{fig:marginal_bidder_revenue_comparison_all_3}
\end{figure}

Figure \ref{fig:marginal_bidder_revenue_comparison_all_3} compares the marginal entrant's expected revenue from entry under the soft and hard bid requirement and reserve price formats, estimated using the TxDoT data for auctions with $14$ potential bidders. The revenue under the reserve price format is calculated for the reserve price $r=1$.\footnote{The estimates were obtained assuming that data were generated by the soft model; see the estimation details below.} The figure illustrates that the expected revenue under the soft bid requirement is strictly positive for entry probabilities close to zero, unlike in the hard format. Moreover, under the soft format, the expected revenue of the marginal entrant is a decreasing function of the entry probability, implying a unique equilibrium in this application. In general, a non-monotone relationship between the marginal entrant's expected revenue and the entry probability under the soft format cannot be ruled out. However, for sufficiently large $n$, a result similar to Proposition \ref{p:quasi-concavity} can be established for the soft model, implying uniqueness of the stable equilibrium when it exists. 

Figure \ref{fig:marginal_bidder_revenue_comparison_all_3} also illustrates that the soft format can disadvantage marginal entrants relative to the reserve price format, although to a lesser extent than the hard format. When the entry probability is low and signal informativeness is high, the expected revenue of the marginal entrant is substantially lower under the soft format than under the reserve price format. Under the soft format, there is a positive probability of facing competition from the government when the entrant is the only active bidder. 
Furthermore, the result of Proposition \ref{prop:limit_theta_infty}\ref{prop:limit_theta_bid_req} continues to hold under the soft format: with sufficiently informative signals, there is no entry in equilibrium, regardless of the entry cost.\footnote{The proofs of this result and \eqref{eq:expected winning bid soft} are relegated to the online Supplement.}

The probability of the auction succeeding (i.e., having two active bidders or one active bidder with a private cost below the government's draw) is given by
\begin{equation*}
  P(p,n) \coloneqq  1- \left(1-p\right)^{n}-\frac{1}{2}\cdot np\left(1-p\right)^{n-1},
\end{equation*}
where the second term follows from the fact that, when there is only one active bidder, the government and the active bidder draw private costs from the same distribution. Thus, the probability of auction failure under the soft format is $1-P(p,n)$. The expected winning bid under the soft format is given by
\begin{multline} \label{eq:expected winning bid soft}
  \frac{1}{P\left(p,n\right)}\biggl\{ \underline{v}-\left(1-p\right)^{n-1}\overline{v}\left(1+\left(\frac{n}{2}-1\right)p\right)\\
  +\,n\int_{\underline{v}}^{\overline{v}}\Lambda^{n-1}\left(v\mid p\right)\left(1-\frac{n-1}{n}\Lambda\left(v\mid p\right)\right)dv-\frac{n(1-p)^{n-1}}{2p}\int_{\underline{v}}^{\overline{v}}(\Lambda\left(v\mid p\right)-1)^{2}dv\biggr\}.
\end{multline}

Following arguments similar to those in the proof of Proposition \ref{p:inverse_bidding_function}, the inverse bidding function under the soft format is given by
\begin{align}
  \xi(b\mid p_n,n) & = b-\frac{{\eta}_{n}\left(p_n,G\left(b\mid n\right)\right)}{\left(n-1\right)g\left(b\mid n\right)},\text{ where } \label{eq:inverse_bidding_soft}\\
  {\eta}_{n}\left(p,y\right)&\coloneqq\frac{\left(1-p\cdot y\right)^{n-1}-\left(1-p\right)^{n-1}y}{p\left(1-p\cdot y\right)^{n-2}+\left(1-p\right)^{n-1}/(n-1)}, \label{eq:eta_n_soft}
\end{align}
with $G(\cdot\mid n)$ and $g(\cdot\mid n)$ being the CDF and PDF of bids, respectively.\footnote{Under the soft format, when each competitor participates with probability $p$ and submits a bid distributed according to $G(\cdot \mid n)$ in the bidding stage, the probability of winning with bid $b$ is $\left\{ 1-p\cdot G\left(b\mid n\right)\right\} ^{n-1}-\left(1-p\right)^{n-1}G\left(b\mid n\right)$. } These expressions can be used to estimate the distribution of private costs conditional on entry, $F^*(\cdot\mid p_n)$. Assuming a parametric copula family with $C(\cdot,\cdot)=C(\cdot,\cdot;\theta_0)$, the restriction \eqref{eq:restrictions_for_theta_and_F} continues to hold under the soft bid requirement format. Therefore, after estimating $F^*(\cdot\mid p_n)$, one can estimate the copula parameter $\theta_0$ and the marginal distribution of private costs $F(\cdot)$ using the same procedure as in Section \ref{sec:estimation}. The results are reported in the next section.

\subsection{Estimation results}

We estimate the soft model using the same data as in Section \ref{subsec:data} and the GMM-based procedure described in Section \ref{sec:estimation}. However, private costs are estimated using the soft model's inverse bidding function in equations \eqref{eq:inverse_bidding_soft} and \eqref{eq:eta_n_soft}. The asymptotic properties of these estimators are straightforward extensions of those for the hard bid requirement model. For the estimation of the inverse bidding function, we set the bandwidth parameters as described in Section \ref{subsec:estimation_results_hard}. The equilibrium entry probabilities $p_n$ can still be estimated using equation \eqref{eq:p_N_ID} as the data set only contains auctions with at least two active bidders. We continue to use the Frank copula family to specify the joint distribution of private costs and signals, and the same grid of private cost values to estimate the copula parameter $\theta$.

The estimated copula parameter in the soft model is very similar to that in the hard model: $\widehat\theta=5.55$ (versus $5.54$ under the hard model) with a standard error of $0.51$. The corresponding Spearman rank correlation coefficient is $\widehat\rho=0.68$, essentially the same as in the hard model, with a $95\%$ confidence interval of $[0.61,0.74]$. The estimates of the marginal distribution of private costs are also very similar to those in the hard model and are not reported here for brevity.

\begin{figure}
    \centering
    \includegraphics[width=0.5\textwidth]{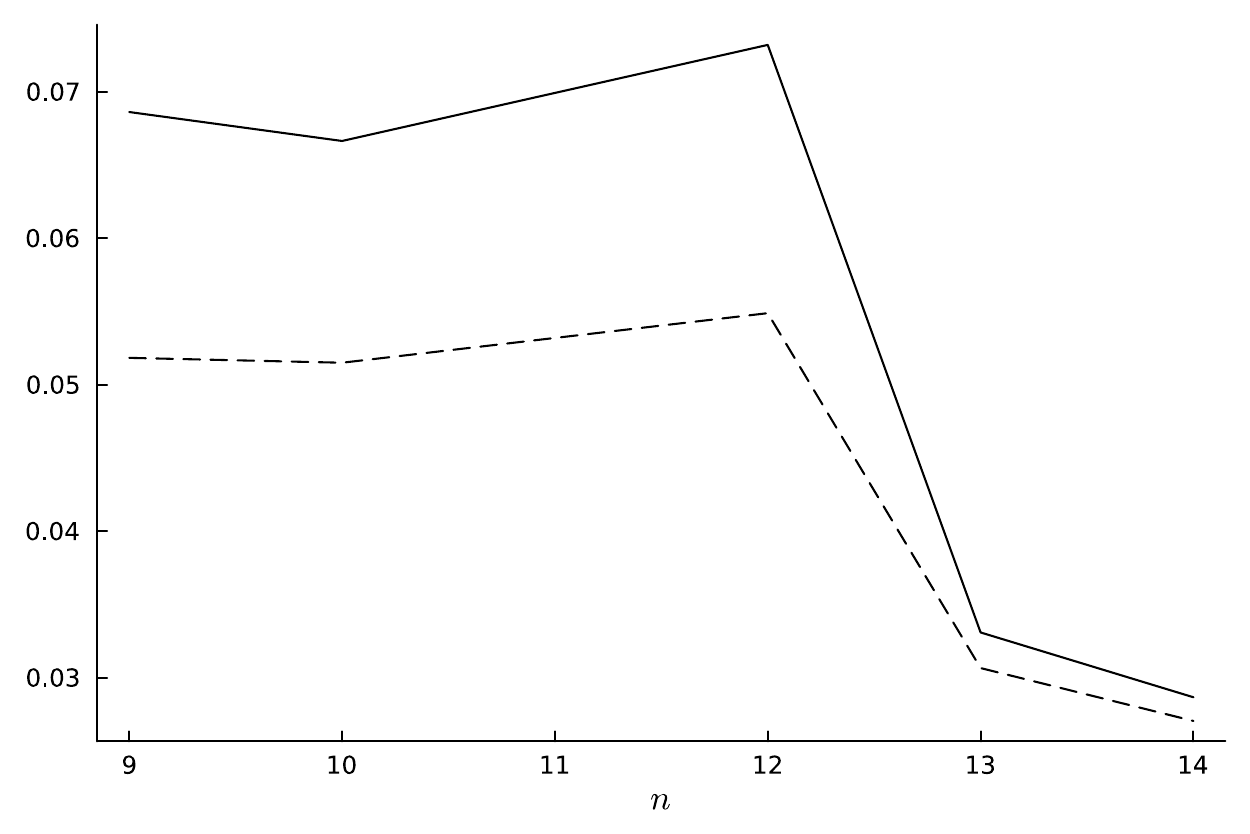}
    \caption{Estimated entry cost $\kappa_n$ under the soft (solid line) and hard (dashed line) bid requirement models for different numbers of potential bidders $n$}
    \label{fig:entry_costs_soft_hard}
\end{figure}

Figure \ref{fig:entry_costs_soft_hard} shows the estimated entry cost $\kappa_n$ under the soft model assumption, with the corresponding estimates under the hard model assumption reported for comparison. The estimated entry cost is higher under the soft model, particularly for $n=9$, $10$, and $12$ (by $0.017$, $0.015$, and $0.018$, respectively). The higher entry costs reflect the fact that the soft model yields higher expected revenues for the marginal entrant than the hard model. To match the resulting equilibrium entry probabilities, the entry cost must be higher under the soft model.

\subsection{Format comparisons}

Since the parameter estimates under the soft model closely resemble those under the hard model, the comparison between the hard bid requirement and reserve price formats based on the soft model estimates aligns with the findings in Section \ref{sec:counterfactuals}. In this section, we compare auction outcomes under the soft bid requirement and reserve price formats in terms of auction failure probabilities and expected winning bids. First, we consider the outcomes using the estimated entry costs $\kappa_n$ for each number of potential bidders $n$. The results are reported in Table \ref{tab:counterfact_soft}.

\begin{table}[]
\caption{Auction failure probabilities and expected winning bids under the soft bid requirement and reserve price formats for different numbers of potential bidders $n$ and reserve price $r$ at the estimated entry costs $\kappa_n$}
    \label{tab:counterfact_soft}
    \centering
  \footnotesize
  \renewcommand{\arraystretch}{1.1}
\begin{tabular}{>{\centering\arraybackslash}>{\centering\arraybackslash}p{1cm}>{\centering\arraybackslash}p{1.4cm}>{\centering\arraybackslash}p{1.4cm}>{\centering\arraybackslash}p{.1cm}>{\centering\arraybackslash}p{1.4cm}>{\centering\arraybackslash}p{1.4cm}>{\centering\arraybackslash}p{.1cm}>{\centering\arraybackslash}p{1.4cm}>{\centering\arraybackslash}p{1.4cm}}
  \toprule
  & \multicolumn{2}{c}{\textbf{Soft}} & & \multicolumn{2}{c}{\textbf{Reserve (\textit{r}=1)}} & & \multicolumn{2}{c}{\textbf{Reserve (\textit{r}=1.05)}} \\
  \cline{2-3}\cline{5-6}\cline{8-9}
  $n$ & {prob. failure} & {expect. win. bid } &  & {prob. failure} & {expect. win. bid } & & {prob. failure} & {expect. win. bid }  \\ \midrule
 9 & 0.292 & 0.959 &  & 0.307 & 0.926 &  & 0.233 & 0.947 \\
  10 & 0.269 & 0.949 &  & 0.281 & 0.919 &  & 0.212 & 0.939 \\
  12 & 0.305 & 0.939 &  & 0.286 & 0.916 &  & 0.216 & 0.935 \\
  13 & 0.066 & 0.890 &  & 0.106 & 0.866 &  & 0.074 & 0.878 \\
  14 & 0.047 & 0.874 &  & 0.083 & 0.853 &  & 0.056 & 0.863 \\\bottomrule
\end{tabular}
\end{table}

\begin{figure}
  \centering
  \begin{subfigure}{0.51\textwidth}
    \includegraphics[width=\linewidth]{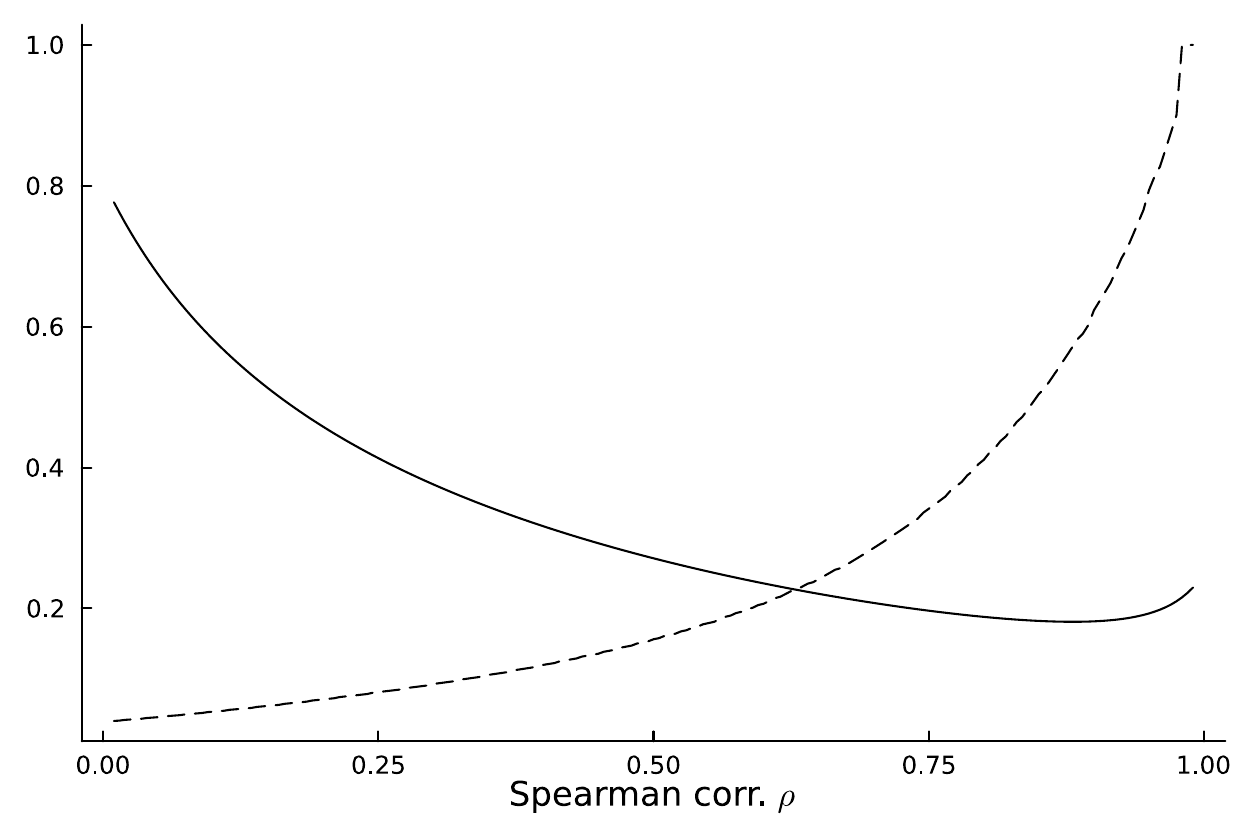}
    \caption{Auction failure probability \label{fig:counter_soft_rho_failure} }
  \end{subfigure}
  \vspace{1em}
  \begin{subfigure}{0.51\textwidth}
    \includegraphics[width=\linewidth]{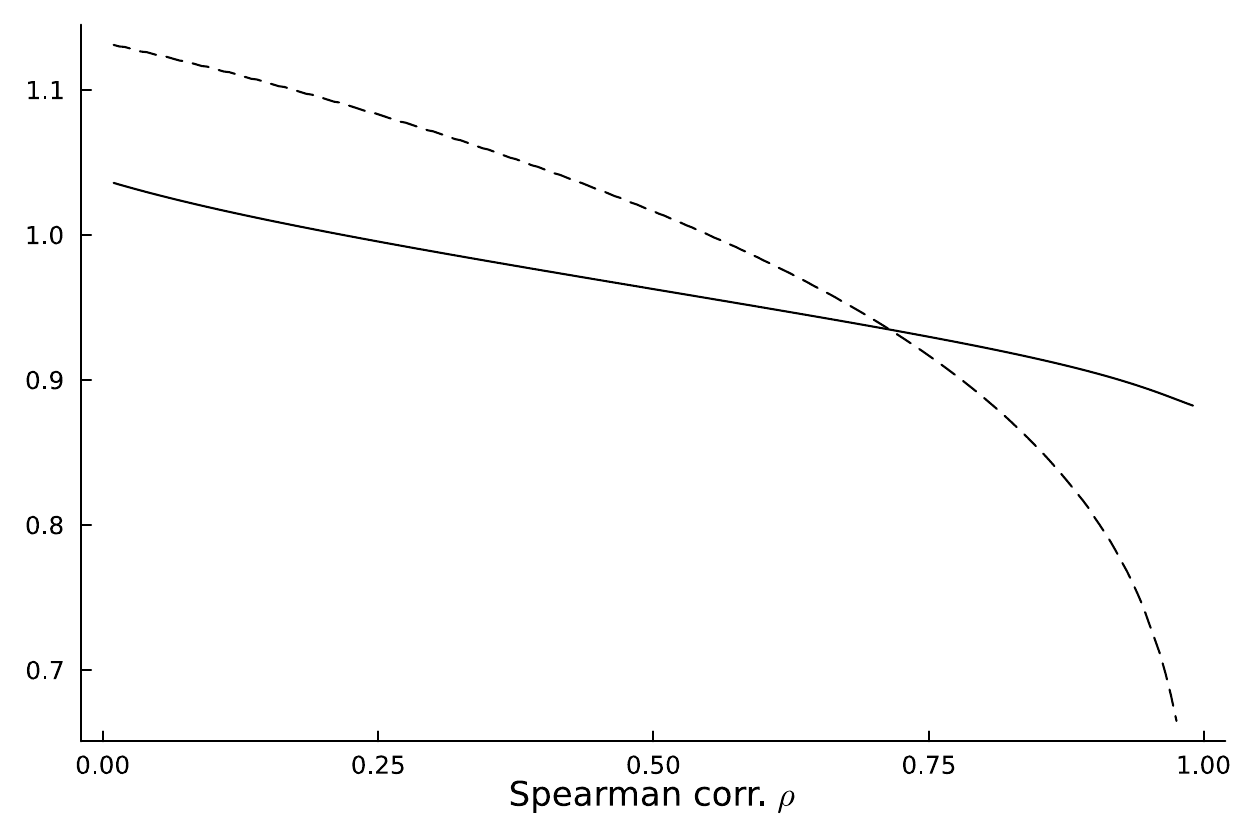}
    \caption{Expected winning bid \label{fig:counter_soft_rho_bid}}
  \end{subfigure}
\caption{Counterfactual auction failure probabilities and expected winning bids for different values of Spearman correlation $\rho$ under the soft bid requirement (dashed line) and reserve price ($r=1.05$, solid line) formats in auctions with $n=10$ potential bidders \label{fig:counter_soft_rho}}
\end{figure}

When the reserve price is $r=1$ (the engineer's estimate), the soft format produces lower auction failure probabilities for all $n$ except $n=12$. In particular, the soft format has lower auction failure probabilities by $1.5$, $1.2$, $4.0$, and $3.6$ percentage points for $n=9$, $10$, $13$, and $14$, respectively. However, the expected winning bid is higher under the soft format than under the reserve price format by $3.3\%$, $3.0\%$, $2.3\%$, $2.4\%$, and $2.1\%$ of the engineer's estimate for $n=9$, $10$, $12$, $13$, and $14$, respectively. Therefore, the auctioneer can relax the bid requirement constraint to reduce the probability of auction failure, but at the cost of a higher expected winning bid. The last two columns of Table \ref{tab:counterfact_soft} report the results for the reserve price format with $r=1.05$. In this case, the reserve price format achieves lower failure probabilities for $n=9$, $10$, and $12$ by $5.9$, $5.7$, and $8.9$ percentage points, while still maintaining a lower expected winning bid for all $n$. For $n=13$ and $14$, the soft format continues to have lower failure probabilities, but the differences are reduced to $0.8$ and $0.9$ percentage points, respectively. The expected winning bid under the reserve price format with $r=1.05$ remains lower than under the soft format by $1.2\%$ and $1.1\%$ of the engineer's estimate for $n=13$ and $14$, respectively.

Lastly, we compare the soft and reserve price formats for different levels of signal informativeness as measured by Spearman's rank correlation $\rho$. We consider auctions with $n=10$ potential bidders and set the entry cost to its estimated level for $n=10$: $5.2\%$ of the engineer's estimate. The reserve price is set to $r=1.05$. The results are reported in Figure \ref{fig:counter_soft_rho}. While the soft format can achieve a lower expected winning bid at high levels of signal informativeness, its auction failure probability is also very high and approaches one as $\rho$ increases. In particular, the reserve price format produces lower auction failure probabilities for $\rho>0.63$. Thus, a similar conclusion holds as for the hard bid requirement format: the reserve price format is preferred when signals are sufficiently informative, especially with respect to auction failure probability. In our application, the estimated level of signal informativeness is $\widehat \rho=0.68$, which falls in the range where the reserve price format dominates the soft bid requirement format in terms of both auction failure probability and expected winning bid.

\section{Conclusion}


An important outstanding question is what happens after an auction fails. The options include direct negotiation with suppliers, relisting, or cancellation as assumed in this paper.
In practice, the scope for direct negotiation can be limited: certain U.S.\ regulations, such as Title 23 of the Code of Federal Regulations governing federal-aid highway programs, prohibit direct negotiations with individual suppliers after an auction fails. Regarding the relisting option, in U.S.\ highway procurement, the auctioneer typically has strong incentives to modify the specifications of the original project to encourage greater competition. The modified project is then re-advertised in a new letting process, often after substantial delays. This effectively transforms the original project into a different one, and bidders would perceive the new auction for the relisted project as a distinct auction rather than a continuation of the failed auction.

In other institutional settings, the auctioned object may be carried over to subsequent rounds without modification, and bidders may strategically delay participation in anticipation of more favorable terms \citep[see][]{mcafee1997sequentially, liu2025dynamic}.\footnote{\cite{liu2025dynamic} study judicial sales auctions using a different model but demonstrate that the option of relisting can cause strategic delays, with bidders waiting for potentially lower reserve prices in subsequent rounds.} When the auctioneer is permitted to actively solicit more bids, in the subsequent rounds, there could be a larger pool of potential bidders. While our paper does not model these dynamics and proceeds under the simplifying assumption of auction cancellation, our analysis provides key insights into how bid requirements and reserve prices affect entry and procurement costs, and highlights the critical role of signal informativeness.
However, incorporating post-failure dynamics into an endogenous entry framework remains an important and challenging direction for future research.

Beyond post-failure dynamics, collusion may interact with the choice of auction format in important ways \citep[see][]{hendricks1989collusion}. Under bid requirements, cartel members can submit phantom bids to create the appearance of competition \citep{porter1993detection}, making auction failure unlikely unless signals are sufficiently unfavorable for all firms. Incorporating cartel structure into our framework would introduce asymmetries in both information (through information sharing among cartel members) and incentives (as cartel firms face no risk of cancellation due to insufficient competition). Whether tests based on departures from bid independence and exchangeability \citep{bajari2003collusion} can detect collusion in the presence of these asymmetries remains an open question.

\appendix
\section{Proofs of the main results}
\begin{proof}[Proof of Proposition \ref{p:entry}]
    By the copula properties, $F_{V\mid S}(v\mid s)=C_2(F(v),s)$.
    The expected revenue from entry for a bidder with signal $S=s$ when the entry probability is $p$ is given by
    \begin{align*}
      &  \int_{\underline v}^{\overline{v}} (\beta(v\mid p,n)-v)H(v\mid p,n)dC_2(F(v),s)\\
      &\quad= \int_{\underline v}^{\overline{v}}\left(\int_v^{\overline{v}} H(u\mid p,n)du\right) dC_2(F(v),s)\\
      &\quad= \int_{\underline v}^{\overline{v}} C_2(F(v),s) H(v\mid p,n) dv,
    \end{align*}
    where the equality in the second line holds by \eqref{eq:bidding}, and the equality in the last line holds by integration by parts and because $H(\overline{v}\mid p, n)=0$. The result follows by setting $s=p=p(n,\kappa)$ and applying the equilibrium condition in \eqref{eq:equilibrium condition}.
  \end{proof}

\begin{proof}[Proof of Proposition \ref{p:quasi-concavity}]
Denote the expected revenue of the marginal entrant as
\[R(p,n)\coloneqq \int_{\underline{v}}^{\overline{v}}C_{2}\left(F\left(v\right),p\right)\left\{ \left(1-C\left(F\left(v\right),p\right)\right)^{n-1}-\left(1-p\right)^{n-1}\right\} dv.\]
The derivative of $R(p,n)$ with respect to $p$ is
\[R_1(p,n)\coloneqq \frac{\partial R(p,n)}{\partial p}=(1-p)^{n-2}A(p,n)-B(p,n),\]
where
\begin{align*}
  A(p,n) &\coloneqq \left(n-1\right)\int_{\underline{v}}^{\overline{v}}C_{2}\left(F\left(v\right),p\right)dv-\left(1-p\right)\int_{\underline{v}}^{\overline{v}}C_{22}\left(F\left(v\right),p\right)dv\\
B\left(p,n\right) & \coloneqq  \left(n-1\right)\int_{\underline{v}}^{\overline{v}}C_{2}^{2}\left(F\left(v\right),p\right)\left(1-C\left(F\left(v\right),p\right)\right)^{n-2}dv\\
 &  \qquad -\int_{\underline{v}}^{\overline{v}}C_{22}\left(F\left(v\right),p\right)\left(1-C\left(F\left(v\right),p\right)\right)^{n-1}dv,
\end{align*}
and $A(p,n),B(p,n)>0$ since $C_{22}(\cdot,\cdot)<0$ by Assumption \ref{a:copula_1}\ref{a:goodnews}.

We show that $\sup_{p\in\left[\varepsilon,1\right]}R_{1}\left(p,n\right)<0$,
if $n$ is sufficiently large. First, since $C_{22}\left(\cdot,\cdot\right)<0$,
we have
\[
R_{1}\left(p,n\right)\leq\left(1-p\right)^{n-2}A\left(p,n\right)-\left(n-1\right)\int_{\underline{v}}^{\overline{v}}C_{2}^{2}\left(F\left(v\right),p\right)\left(1-C\left(F\left(v\right),p\right)\right)^{n-2}dv.
\]
It follows that
\[
\underset{p\in\left[\varepsilon,1\right]}{\sup}\left(1-p\right)^{n-2}A\left(p,n\right)\leq\left(1-\varepsilon\right)^{n-2}\bar{A}\left(n\right),
\]
where
\[
\bar{A}\left(n\right)\coloneqq\left(n-1\right)\int_{\underline{v}}^{\overline{v}}C_{2}\left(F\left(v\right),\varepsilon\right)dv+\underset{p\in\left[\varepsilon,1\right]}{\sup}\left|\int_{\underline{v}}^{\overline{v}}C_{22}\left(F\left(v\right),p\right)dv\right|.
\]
By these results,
\begin{equation}
R_{1}\left(p,n\right)\leq\left(1-\varepsilon\right)^{n-2}\bar{A}\left(n\right)-\left(n-1\right)\int_{\underline{v}}^{\overline{v}}C_{2}^{2}\left(F\left(v\right),p\right)\left(1-C\left(F\left(v\right),p\right)\right)^{n-2}dv.\label{eq:R_1_upper_bound}
\end{equation}
The copula density function is given by $C_{21}(x,y) \coloneqq \partial^2C(x,y)/\partial y \partial x$. Let $C_1(x,y) \coloneqq \partial C(x,y)/\partial x$. We show in the online Supplement that
\[
\int_{\underline{v}}^{\overline{v}}C_{2}^{2}\left(F\left(v\right),p\right)\left(1-C\left(F\left(v\right),p\right)\right)^{n-2}dv=\frac{1}{(n-2)^{2}}\left(\frac{2C_{21}^{2}(0,p)}{f(\underline{v})C_{1}^{3}(0,p)}+\delta_{n}\left(p\right)\right),
\]
where $\delta_{n}\left(p\right)\downarrow0$ uniformly in $p\in\left[\varepsilon,1\right]$.
By this result and \eqref{eq:R_1_upper_bound},
\[
\underset{p\in\left[\varepsilon,1\right]}{\sup}R_{1}\left(p,n\right)\leq-\frac{\left(n-1\right)}{(n-2)^{2}}\left(\underset{p\in\left[\varepsilon,1\right]}{\inf}\frac{2C_{21}^{2}(0,p)}{f(\underline{v})C_{1}^{3}(0,p)}-\underset{p\in\left[\varepsilon,1\right]}{\sup}\left|\delta_{n}\left(p\right)\right|-\frac{\left(n-2\right)^{2}}{\left(n-1\right)}\left(1-\varepsilon\right)^{n-2}\bar{A}\left(n\right)\right).
\]
Moreover, $(n-2)^{2}\left(1-\varepsilon\right)^{n-2}\bar{A}\left(n\right)/\left(n-1\right)\downarrow0$
as $n\uparrow\infty$. The conclusion follows from these results and
the fact that $\inf_{p\in\left[\varepsilon,1\right]}C_{21}^{2}(0,p)/C_{1}^{3}(0,p)>0$.
\end{proof}

\begin{proof}[Proof of Proposition \ref{p:cost_conditional}]
Suppose there are $N^*\geq 2$ active bidders. The CDF of the minimum value among the $N^*$ active bidders is 
\begin{equation*}
    1-(1-F^*(v\mid p))^{N^*},
\end{equation*}
and the corresponding expected winning bid when there are $N^*$ active bidders is
\begin{equation*}
    \int_{\underline v}^{\overline{v}}\beta(v\mid p,n) d(1-(1-F^*(v\mid p))^{N^*})=N^*\int_{\underline v}^{\overline{v}}\beta(v\mid p,n) (1-F^*(v\mid p))^{N^*-1} dF^*(v\mid p).
\end{equation*}
Therefore, conditional on at least two active bidders, the expected winning bid is
\begin{align}
    \lefteqn{\frac{1}{\Pr[N^*\geq 2\mid p,n]}
    \int_{\underline{v}}^{\overline{v}} \beta(v\mid p,n) \sum_{j=2}^n \binom{n}{j} jp^j(1-p)^{n-j} (1-F^*(v\mid p))^{j-1}dF^*(v\mid p)}\notag\\
    &\quad= \frac{n}{\Pr[N^*\geq 2\mid p,n]}
    \int_{\underline{v}}^{\overline{v}} \beta(v\mid p,n) \sum_{j=2}^n\binom{n-1}{j-1}p^j(1-p)^{n-j} (1-F^*(v\mid p))^{j-1}dF^*(v\mid p)\notag\\
    &\quad= \frac{np}{\Pr[N^*\geq 2\mid p,n]}
    \int_{\underline{v}}^{\overline{v}} \beta(v\mid p,n) \sum_{j=1}^{n-1}\binom{n-1}{j}p^j(1-p)^{n-1-j} (1-F^*(v\mid p))^{j}dF^*(v\mid p)\notag\\
    &\quad=  \frac{np}{\Pr[N^*\geq 2\mid p,n]}\int_{\underline{v}}^{\overline{v}} \beta(v\mid p,n) H(v\mid p,n)dF^*(v\mid p) \notag\\
    &\quad=-\frac{n}{\Pr[N^*\geq 2\mid p,n]}\int_{\underline{v}}^{\overline{v}} \beta(v\mid p,n) H(v\mid p,n)d\Lambda(v\mid p),\label{eq:PC1}
\end{align}
where the first equality holds by the binomial property $j\binom{n}{j} = n\binom{n-1}{j-1}$, the equality in the third line holds by the binomial theorem $\sum_{j=1}^{n-1}\binom{n-1}{j}(p(1-F^*(v\mid p)))^{j}(1-p)^{n-1-j}=(1-pF^*(v\mid p))^{n-1} -(1-p)^{n-1} $ and because $1-pF^*(v\mid p)=\Lambda(v\mid p)$. Applying integration by parts to the integral in \eqref{eq:PC1}, we obtain
\begin{align}
    \lefteqn{\int_{\underline{v}}^{\overline{v}} \beta(v\mid p,n) H(v\mid p,n)d\Lambda(v\mid p)} \notag\\
    &\quad =\beta(v\mid p,n) H(v\mid p,n) \Lambda(v\mid p)\Bigg|_{\underline{v}}^{\overline{v}}-\int_{\underline{v}}^{\overline{v}} \Lambda(v\mid p)d(\beta(v\mid p,n)H(v\mid p,n))\notag\\
    &\quad=-{\underline v}H(\underline v\mid p,n) -\int_{\underline{v}}^{\overline{v}} H(v\mid p,n) dv -\int_{\underline{v}}^{\overline{v}} \Lambda(v\mid p)d(\beta(v\mid p,n)H(v\mid p,n))\notag\\
    &\quad= -{\underline v}H(\underline v\mid p,n) +(1-p)^{n-1}(\overline{v} - \underline v) - \int_{\underline{v}}^{\overline{v}} \Lambda^{n-1}(v\mid p)  dv \notag \\
    &\quad\quad- \int_{\underline{v}}^{\overline{v}} \Lambda(v\mid p)d(\beta(v\mid p,n)H(v\mid p,n))\notag\\
    &\quad=\overline{v} (1-p)^{n-1} -{\underline v}- \int_{\underline{v}}^{\overline{v}} \Lambda^{n-1}(v\mid p)  dv - \int_{\underline{v}}^{\overline{v}} \Lambda(v\mid p)d(\beta(v\mid p,n)H(v\mid p,n)),\label{eq:PC2}
\end{align}
where the second equality holds by $H(\overline{v}\mid p,n)=0$ and the equilibrium bidding strategy in \eqref{eq:bidding}, and the last equality holds by $\Lambda(\underline v\mid p)=1$. By the first-order condition for the equilibrium bidding strategy,
\begin{align*}
    d(\beta(v\mid p,n)\cdot H(v\mid p,n))/ dv=&v\cdot H'(v\mid p,n),
\end{align*}
and the second integral in \eqref{eq:PC2} becomes 
\begin{align}
   \lefteqn{ (n-1)\int_{\underline{v}}^{\overline{v}} \Lambda^{n-1}(v\mid p)vd\Lambda(v\mid p)} \notag\\
   &\quad=  \frac{n-1}{n}\int_{\underline{v}}^{\overline{v}} vd\Lambda^n(v\mid p) \notag\\
   &\quad = \frac{n-1}{n}(\overline{v} (1-p)^n - \underline{v})-\frac{n-1}{n}\int_{\underline{v}}^{\overline{v}} \Lambda^n(v\mid p) dv. \label{eq:PC3}
\end{align}
Combining \eqref{eq:PC2} and \eqref{eq:PC3} and multiplying by $n$, we obtain
\begin{align*}
\lefteqn{ n\int_{\underline{v}}^{\overline{v}} \beta(v\mid p,n) H(v\mid p,n)d\Lambda(v\mid p)}\notag \\
  &\quad= \overline{v} ((1-p)^n +np(1-p)^{n-1}) -{\underline v} 
  -  n\int_{\underline{v}}^{\overline{v}} \Lambda^{n-1}(v\mid p)\left(1-\frac{n-1}{n}\Lambda(v\mid p)\right)dv.
\end{align*}
\end{proof}

\begin{proof}[Proof of equation \eqref{eq:win_Bid_reserve}]
  When the entry probability is $p$, the CDF of private costs of an active bidder (i.e., conditional on $S\leq p$ and $V\leq r$) is 
  \[F^*(v\mid p,r)\coloneqq C(F(v),p)/C(F(r),p).\] The CDF of the minimum private cost among $N^*$ active bidders is $1-(1-F^*(v\mid p,r))^{N^*}$. As in the proof of Proposition \ref{p:cost_conditional}, the expected winning bid is
  \begin{align*}
    &\frac{1}{\Pr[N^*\geq 1\mid p,n,r]}\int_{\underline v}^{r} \beta(v\mid p,n,r) \Bigg(\sum_{j=1}^n \binom{n}{j} jC^j(F(r),p)(1-C(F(r),p))^{n-j} \\
    &\qquad\qquad\qquad\qquad\qquad\qquad\qquad\qquad \times (1-F^*(v\mid p,r))^{j-1}\Bigg) dF^*(v\mid p,r).
  \end{align*}
  The rest of the proof follows the same steps as in the proof of Proposition \ref{p:cost_conditional}.
  \end{proof}

  \begin{proof}[Proof of Proposition \ref{prop:limit_theta_infty}] We assume $\lim_{\theta\uparrow\infty}C(x,y;\theta)=\min\{x,y\}$. Therefore, $\lim_{\theta\uparrow\infty}C_2(x,y;\theta)=\mathbbm{1}(x\geq y)$. For part \ref{prop:limit_theta_bid_req}, we have 
\begin{align*}
   \lim_{\theta\uparrow\infty} \Pi(p,n,\kappa,p;\theta) 
   & = \int_{\underline v}^{\overline{v}} \mathbbm{1}(F(v)\geq p)\left((1-\min\{F(v),p\})^{n-1}-(1-p)^{n-1}\right)dv
   -\kappa
    =-\kappa,
\end{align*}
where the first equality holds by \eqref{eq:Pi_bid_requirement}.
For part \ref{prop:limit_theta_res_price}, suppose $F^{-1}(p)\leq r$. By \eqref{eq:Pi_reserve_price},
\begin{align*}
     \lim_{\theta\uparrow\infty} \Pi(p,n,\kappa,r,p;\theta)
     &=\int_{\underline v}^r \mathbbm{1}(F(v)\geq p)(1-\min\{F(v),p\})^{n-1}dv
   -\kappa\\
   &=\int_{F^{-1}(p)}^r(1-p)^{n-1}dv-\kappa \\
   &= (r-F^{-1}(p))(1-p)^{n-1}-\kappa.
\end{align*}
If $F^{-1}(p)\geq r$, the expected revenue is zero in the limit.
    
\end{proof}

  \begin{proof}[Proof of equation \eqref{eq:p_N_ID}] Consider an auction with $N_l=n\geq 2$ potential bidders. We have
    \begin{align*}
      &\Pr[S_{1l}\leq p_n\mid \bar{N}_{l}\geq 2, N_l=n] \\
      &\quad=
      \frac{\Pr[S_{1l}\leq p_n\,\text{and } S_{jl}\leq p_n \text{ for some } j=2,\ldots,n \mid N_l=n]}{\Pr[ \bar{N}_{l}\geq 2 \mid N_l=n]}\\
      &\quad =\frac{\Pr[S_{1l}\leq p_n ] \bigg(1- \Pr[ S_{jl}> p_n \text{ for all } j=2,\ldots,n \mid N_l=n]\bigg)}{1-\Pr[ \bar{N}_{l}<2 \mid N_l=n]},
    \end{align*}
  where first equality holds by Assumption \ref{a:ID1}\ref{a:independent}.
  \end{proof}


\bibliographystyle{econometrica}
\bibliography{entry_cost,entry_econometrics}
\end{document}